\documentclass[11pt,letterpaper]{article}

\usepackage{amsmath,amsthm,amsfonts,bbm}
\usepackage{color}
\usepackage[margin=1in]{geometry}
\usepackage[utf8]{inputenc}
\usepackage{lmodern}
\usepackage{mathtools}
\usepackage{newtxmath}      

\usepackage{hyperref}

\newtheorem{theorem}{Theorem}

\newtheorem{lemma}[theorem]{Lemma}

\newtheorem{corollary}[theorem]{Corollary}

 \addtolength{\partopsep}{-2mm}
 \setlength{\parskip}{4pt plus 1pt}
 
\DeclarePairedDelimiter{\norm}{\lVert}{\rVert}

\newcommand{\cL}{\mathcal L}

\newcommand{\R}{\mathbb{R}}

\newcommand{\Off}{\text{Off}}

\newcommand{\rt}{\vvmathbb{r}}

\newcommand{\Ind}{\vvmathbb 1}
\let\1\Ind

\makeatletter
 \addtolength{\partopsep}{-2mm}
 \setlength{\parskip}{4pt plus 1pt}
\renewcommand\paragraph{\@startsection{paragraph}{4}{\z@}%
                                    {1.5ex \@plus1ex \@minus.2ex}%
                                    {-1em}%
                                    {\normalfont\normalsize\bfseries}}

\makeatother

\allowdisplaybreaks

\title{Online metric allocation}
\author{Nikhil Bansal\thanks{University of Michigan, Ann Arbor. \texttt{bansal@gmail.com}} \and Christian Coester\thanks{Tel Aviv University, Israel. \texttt{christian.coester@gmail.com}}}
\date{}

\begin{document}

\hypersetup{pageanchor=false}
\maketitle

\begin{abstract}

\vspace{2mm}
 
    We introduce a natural online allocation problem that connects several of the most fundamental problems in online optimization. Let $M$ be an $n$-point metric space. Consider a resource that can be allocated in arbitrary fractions to the points of $M$. At each time $t$, a convex monotone cost function $c_t\colon[0,1]\to\R_+$ appears at some point $r_t\in M$. In response, an algorithm may change the allocation of the resource, paying movement cost as determined by the metric and service cost $c_t(x_{r_t})$, where $x_{r_t}$ is the fraction of the resource at $r_t$ at the end of time $t$. For example, when the cost functions are $c_t(x)=\alpha x$, this is equivalent to randomized MTS, and when the cost functions are $c_t(x)=\infty\cdot\1_{x<1/k}$, this is equivalent to fractional $k$-server.
    
    \smallskip
    
    We give an $O(\log n)$-competitive algorithm for weighted star metrics. Due to the generality of allowed cost functions, classical multiplicative update algorithms do not work for the metric allocation problem. A key idea of our algorithm is to decouple the rate at which  a variable is updated from its value, resulting in interesting new dynamics. This can be viewed as running mirror descent with a \emph{time-varying} regularizer, and we use this perspective to further refine the guarantees of our algorithm. The standard analysis techniques run into multiple complications when the regularizer is time-varying, and we show how to overcome these issues by making various modifications to the default potential function.
    
    \smallskip
    
    We also consider the problem when cost functions are allowed to be non-convex. In this case, we give tight bounds of $\Theta(n)$ on tree metrics, which imply  deterministic and randomized competitive ratios of $O(n^2)$ and $O(n\log n)$ respectively on arbitrary metrics. Our algorithm is based on an $\ell_2^2$-regularizer.
\end{abstract}

\thispagestyle{empty}
\clearpage
\newpage
\hypersetup{pageanchor=true}
\setcounter{page}{1}
\section{Introduction}

We consider the following online {\em metric allocation} problem (MAP). There is an underlying metric space $M$ on $n$ points with distances $d(i,j)$
between points $i$ and $j$.
An algorithm maintains an allocation of a resource to the points of $M$, represented by a point in the simplex $\Delta=\{x\in \R_+^M\mid\sum_{i\in M}x_i=1\}$. 
At each time step $t$, tasks arrive at the points of the metric space. The task at $i\in M$ is specified by a non-increasing and convex cost function $c_{t,i}: [0,1] \rightarrow \R_+$, which describes the cost of completing the task as function of the amount of resource available at $i$.\footnote{The case when a task appears at only one point $r_t$ at time $t$, i.e., $c_{t,i}=0$ for $i\ne r_t$, is equivalent. See Appendix~\ref{app:simplified}.}
Upon receiving the tasks, the algorithm can modify its previous allocation $x(t-1)\in\Delta$ to $x(t)\in\Delta$. It then incurs a {\em service cost}
 $c_t(x(t))=\sum_{i\in M} c_{t,i}(x_i(t))$ and a {\em movement cost} of modifying $x(t-1)$ to $x(t)$ according to the distances given by the metric (i.e., sending an $\epsilon$ fraction of the resource from $i$ to $j$ incurs cost $\epsilon\cdot d(i,j)$). Later, we will also consider the case when cost functions are allowed to be non-convex.
 
 The problem has a natural motivation. For example, the resource may represent workers that can be allocated to various locations. At step $t$, one could transfer extra workers to locations with high cost to execute the tasks more efficiently. This example also illustrates the motivation for cost functions being non-increasing (having more resources can only help) and convex (adding extra resources has diminishing returns).
 
More importantly, MAP also generalizes and is closely related to several fundamental and well-studied problems in online computation.
 \begin{itemize}
     \item \textbf{Metrical Task Systems.} In the Metrical Task Systems (MTS) problem there is a metric space $(M,d)$ on $n$ points, and the algorithm resides at some point in $M$ at any time. At time $t$, a cost vector $\alpha_t\in\R_+^M$ arrives.
     The algorithm can then move from its old location $i_{t-1}\in M$ to a new $i_t\in M$, paying movement cost $d(i_{t-1},i_t)$ and service cost $\alpha_{t,i_t}$.
     
     The state of a \emph{randomized} algorithm for MTS is given by a probability distribution $x(t)=(x_1(t),\ldots,x_n(t))$ on the $n$ points. Its expected service cost at time $t$ is given by $\sum_i\alpha_{t,i},x_i(t)$ and its expected movement cost is measured just like in MAP. Thus, randomized MTS is the special case of MAP when cost functions are of the form $c_{t,i}(x_i) = \alpha_{t,i}x_i$, i.e., linear and increasing. (We show in Appendix~\ref{app:incrDecr} that MAP with non-decreasing cost functions is equivalent to MAP with non-increasing cost functions, so this is indeed a special case.)
  
 \item \textbf{$\boldsymbol{k}$-server.}
 In the $k$-server problem, there is a metric space $(M,d)$ and $k$ servers that reside at points of $M$. At time $t$, some point $r_t$ is requested,
which must be served by moving a server to $r_t$. The goal is to minimize the total movement cost.
The \emph{fractional} $k$-server problem is the relaxation\footnote{All known randomized $k$-server algorithms with polylogarithmic competitive ratios are based on this relaxation.
} of the randomized $k$-server problem where points can have a fractional server mass. 
A request at $r_t$ is served by having a server mass of at least $1$ at $r_t$.

Observe that fractional $k$-server is the special case of MAP with cost functions $c_{t,i}=0$ for $i\ne r_t$ and $c_{t,r_t}(x_{r_t})= \infty$ if $x_{r_t} < 1/k$ and $0$ if $x_{r_t} \geq 1/k$, by viewing
$kx_i(t)$ as the fractional server mass at location $i$ at time $t$. Note that such $c_{t,i}$ are convex and non-increasing.

 \item \textbf{Convex function chasing.}
 In convex functions chasing, the request at time $t$ is a convex function $f_t: \R^n \rightarrow \R^+\cup\{\infty\}$. 
 The algorithm maintains a point in $\R^n$, and given $f_t$ it can move from its old position $x(t-1)\in \R^n$ to a new $x(t)\in\R^n$, incurring cost $\|x(t)-x(t-1)\| +  f_t(x(t))$.
 
 MAP on a star metric captures a special case of convex function chasing where cost functions are supported on the unit simplex and have separable form (i.e.,~$f_t(x)= \sum_{i=1}^n f_{ti}(x_i)$ with $f_{ti}$ monotone).
 \end{itemize}
 
We remark that the $k$-server problem on an $n$-point metric space is also a certain special case of MTS on a metric space with $N=\binom{n}{k}$ point. But as the competitive ratio of MTS depends on $N$, this
does not give any interesting bounds for $k$-server.
In contrast, MAP generalizes both fractional $k$-server and randomized MTS in the \emph{same} metric space.

\subsection{Understanding uniform metric spaces}\label{sec:understanding}
Over the years there has been remarkable progress in obtaining poly-logarithmic guarantees for randomized MTS \cite{BlumKRS00,Sei99,Bar96,BartalBBT97,FiatM03,BansalBN10MTS,BCLL19,CoesterL19} and randomized $k$-server \cite{CoteMP08,BansalBN10Towards,BansalBN12,BansalBMN15,BCLLM18,Lee18,BuchbinderGMN19,GuptaKP21} on general metric spaces. The current best bounds on the competitive ratio are $O(\log^2 n)$ for MTS,  and $O(\log^2 k \log n),\, O(\log^3k\log A)$ for $k$-server, where $A$ is the aspect ratio of the metric space.

All these results for general metrics are actually obtained by using the corresponding results on uniform metrics as building blocks.
Roughly speaking, 
they first approximate a general metric space by hierarchically-separated trees (HSTs) (losing a factor of $O(\log n)$ \cite{FRT04}). Then,  
given an algorithm for uniform metrics (or weighted stars) with a certain refined guarantee (that we describe later below), they 
run this algorithm recursively on each internal node of the HST, to decide how to allocate the resource to each subtree of that node.

\paragraph{MAP on uniform metrics.}
What makes MAP particularly interesting is that it poses several new conceptual difficulties and it is already unclear how to obtain a poly-logarithmic competitive ratio on uniform metrics. The problem is the generality of the convex cost functions. 
To understand this better, let us 
first consider the algorithms for randomized MTS and randomized k-server on star metrics.

The key underlying idea for achieving poly-logarithmic guarantees for these problems is that of multiplicative updates. 
For $k$-server, if a location $r$ is requested, the fractional server amount $z_i$ at other locations $i$ is decreased at rate proportional to $1-z_i+\delta$ (i.e.,~the amount of server already missing at $i$ plus a small constant $\delta$).
Similarly for MTS, if a cost is incurred at location $r$, then the other locations are increased at rate proportionnally to $x_i+\delta$.

For MAP, it is thus clear that one would like to do some kind of multiplicative update to get a polylog guarantee, but what is unclear is ``multiplicative update with respect to what?".

More concretely, if we model $k$-server and MTS  to MAP (as described above), the $k$-server update is $x_i' \propto  (1/k-x_i) + \delta$. This is natural as $1/k$ is a fixed parameter with a special meaning as $c_t(x) = \infty$ for $x_{r_t}<1/k$ and $0$ otherwise. Similarly, for MTS the reason why 
$x_i'\propto (x_i+\delta)$ is natural is that the $c_{t,i}$ always have $x$-intercept at $0$. In contrast, cost functions in MAP are lacking such an intrinsic \emph{scale}.

\paragraph{The difficulty due to scale-freeness.}
We give an instructive example to illustrate the difficulty due to this lack of {\em scale}. Understanding this example will be useful when we describe our ideas later to address these difficulties.

\paragraph{Example.}
We saw above how to model $k$-server via MAP by interpreting $z_i:=k x_i$ as the amount of server mass at $i$, and a request to point $i$ corresponds to the cost function $c_t(x)=\infty\cdot\1_{\{x_i<1/k\}}$. However, this correspondence between the server mass $z_i$ and the variable $x_i$ is quite arbitrary. 

A different but completely equivalent way of modeling $k$-server via MAP is to choose any \emph{offset} vector $a\in[0,1]^n$ with $s:=1-\sum_i a_i>0$, and interpret $z_i:=k\cdot(x_i-a_i)/s$ as the server mass at $i$. Then, a request to page $i$ corresponds to the cost function $c_t(x)=\infty\cdot\1_{\{x_i<a_i+s/k\}}$, and we can additionally intersperse cost functions $c_t(x)=\infty\cdot\1_{\{x_j<a_j\}}$ for each $j$ to ensure that $z_j\ge0$. 
In other words, an adversary can simulate a $k$-server request sequence 
in various \emph{regions} of the simplex and at different \emph{scales}. 

Thus, the challenge for an algorithm is to find out the ``active region'' and ``active scale''. 
What is even worse is that the adversary can keep changing this region and scale arbitrarily over time.
So any online algorithm for MAP needs to learn this region dynamically and determine how to do multiplicative updates with respect to the scale and offset of the current region.

\noindent {\bf Remark.}
At a higher level, the difficulty of learning an ``active region'' is also related to the difficulty in obtaining a polylog$(k)$-competitive algorithm for the $k$-server problem in arbitrary metrics. The reason for the $\log(n)$ term 
in the current bounds 
is due to an HST embedding of the entire metric space of $n$ points. However, an adversary will typically play only in a ``region'' of poly$(k)$ many points (which may change over time), so an algorithm should dynamically learn this region and adapt its strategy accordingly. A step in this direction was made in~\cite{BCLLM18}, where a dynamic HST embedding was used to obtain an $O(\log^3k\log A)$-competitive algorithm for metric spaces with aspect ratio $A$.

\subsection{Results and techniques}

For convex cost functions, we obtain the following tight bound on weighted star metrics.
\begin{theorem}\label{thm:non-refined}
There is an $O(\log n)$-competitive algorithm for MAP on weighted star metrics.
\end{theorem}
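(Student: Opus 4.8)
The plan is to recast MAP on a star as a continuous-time control problem, run online mirror descent with a \emph{separable, time-varying} regularizer, and bound the competitive ratio with a Bregman-divergence potential that has to be modified to survive the time dependence. Only the upper bound is needed.

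\textbf{Reduction to a clean model.} I would first fix the star with centre $o$ and leaves $1,\dots,n$ of weights $w_1,\dots,w_n$, invoke the footnoted equivalence to assume that every request hits a single leaf $r_t$, and encode a configuration as $x\in\Delta$ with leftover mass $x_o=1-\sum_i x_i$ at the centre, so that the movement cost is $\sum_i w_i\,|x_i(t)-x_i(t-1)|$ up to a factor of $2$. Since a convex non-increasing $c_{t,r}$ is the integral of a non-increasing marginal, I would decompose each request into an infinitesimal ``pressure'' process: at each instant the adversary names a leaf $r$ together with a rate $g=-c_{t,r}'(x_r)\ge 0$ (evaluated at the current value), and the algorithm must react instantaneously. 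This lets the algorithm be given by a differential rule and lets the analysis be a single per-instant inequality.

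\textbf{The algorithm.} Alongside $x$ I would maintain \emph{rate variables} $\rho=(\rho_1,\dots,\rho_n)$, $\rho_i\in[\delta_i,1]$, and, schematically, respond to pressure $g$ at leaf $r$ by moving $x$ along the negative service-cost gradient in the local geometry $\mathrm{diag}(1/\rho_i)$, i.e.
\[
 \dot x_i \;=\; \rho_i\bigl(g\cdot\mathbf{1}[i=r]-\lambda\bigr)\qquad\text{for every leaf }i ,
\]
where $\lambda=\lambda(t)\ge 0$ is the unique multiplier keeping $\sum_i x_i=1$ (so mass is drawn into $r$ from the other leaves at rates proportional to their $\rho_i$). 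This is exactly online mirror descent with the separable regularizer $\Phi_t(x)=\sum_i\psi_{i,t}(x_i)$, $\psi_{i,t}''=1/\rho_i$. Choosing $\rho_i=x_i+\delta_i$ would recover a multiplicative-weights rule of the type used for MTS and $k$-server, but the whole point is to let $\rho_i$ evolve \emph{independently of $x_i$} so as to track the currently relevant scale at $i$: I would shrink $\rho_i$ geometrically, down to the floor $\delta_i$, whenever $i$ is the pressured leaf, and let it relax back up otherwise, so that $\rho_i$ records how fine a scale the adversary has lately been probing at $i$. The floors $\delta_i$ I would set of order $1/n$ (adjusted for the weights) so that the ``diameter'' of the relevant divergence on the star is $O(\log n)$.

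\textbf{The analysis and where the difficulty lies.} Comparing against an offline solution $x^*$ and analyzing each of its moves separately, I would use the potential $P=\beta\,\widetilde D_{\Phi_t}(x^*\,\|\,x)$ with $\beta=\Theta(\log n)$, where $\widetilde D$ is the Bregman divergence of $\Phi_t$ together with problem-specific correction terms, and prove the per-instant bound $d(\text{ALG})+dP\le\beta\,d(x^*)$ in three cases: (i) $x^*$ moves — $P$ grows by at most $\beta$ times the divergence diameter times the distance, which is $O(\log n)$ times OPT's movement by the choice of the $\delta_i$; (ii) pressure is applied and $x$ follows the mirror-descent ODE — first-order optimality of the step cancels ALG's service plus movement cost against $-dP$ up to the term $\beta\,dc_t(x^*)$, the standard mirror-descent computation; (iii) the $\rho_i$, hence $\Phi_t$, change — then $P$ changes by $\beta\,\partial_t\widetilde D_{\Phi_t}(x^*\|x)$, which the $\rho_i$-dynamics and the correction terms are designed jointly to keep non-positive. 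The crux, precisely as the abstract flags, is to make (iii) and the small-scale instances of (i) coexist: with a time-varying regularizer the naive Bregman potential no longer telescopes, and when some $\rho_i$ is tiny the raw divergence $D_{\Phi_t}(x^*\|x)$ can be as large as $\Theta\bigl(w_i\log(1/\delta_i)\bigr)$, so a single move of $x^*$ at leaf $i$ would blow $P$ up. The remedy is the promised modification of the default potential: add terms that (a) cancel the first-order effect of $\partial_t\Phi_t$, which pins down how the $\rho_i$-dynamics and the correction must be chosen in tandem, and (b) replace the raw divergence by a truncated/reweighted variant whose diameter stays $O(\log n)\cdot w_i$ uniformly in $\rho_i$, paying for this with extra error terms that then have to be absorbed into OPT's cost in cases (i) and (ii). Verifying that these accounts close \emph{simultaneously} — that the fix introduced to control (iii) does not spoil (i) or (ii) — is the technical heart of the proof; the remaining steps are the routine mirror-descent estimates.
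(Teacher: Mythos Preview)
Your framework matches the paper's: mirror descent on the simplex with a separable time-varying regularizer, rate variables governing the inverse Hessian, and a Bregman-type potential that must be modified to survive the time dependence. You also correctly flag the two hard points --- controlling $\partial_t\Phi_t$ and keeping the potential $O(\log n)$-Lipschitz in $y$ uniformly in the rates.

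There are two concrete gaps. First, your $\rho_i$-dynamics (``shrink $\rho_r$ when $r$ is pressured, relax otherwise'') discards the signal the paper actually uses to learn the scale. The paper maintains a \emph{baseline} $b_i\ge x_i$, sets $\rho_i=b_i-x_i$, and when cost $\alpha$ is charged at $r$ it moves $b_r$ \emph{up} if $\alpha>\rho_r$ and \emph{down} if $\alpha\le\rho_r$. Large $\alpha$ indicates offline likely has $y_r<x_r$ (offline is paying too), so shrinking $\rho_r$ there is exactly wrong --- it traps mass at $r$ that will be expensive to shed later. Your rule ignores $\alpha$ and hence cannot distinguish these regimes; any fixed shrink/relax rates are wrong on one of two instances with identical pressure patterns but different cost magnitudes.

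Second, if $\rho_i$ is truly a free variable with $\psi_{i,t}''\equiv 1/\rho_i$ constant in $x_i$, then $\psi_{i,t}$ is quadratic and the Bregman divergence is $\sum_i w_i(y_i-x_i)^2/(2\rho_i)$, whose $y$-Lipschitz constant is $\Theta(w_i/\delta_i)$, not the $\Theta(w_i\log(1/\delta_i))$ you claim. The paper gets the logarithm precisely because its rate is $b_i-x_i$ (plus damping): only the baseline $b_i$ is learned, and the regularizer is a negative entropy in $b_i-x_i$. This partial coupling is what lets the modified potential --- where the global scale $S$ in each summand is replaced by the local $b_i-x_i\wedge y_i$ --- stay $O(\log n)$-Lipschitz. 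Relatedly, the paper's unrefined analysis also needs a second, linear ``scale-mismatch'' potential $\sum_i[\rho_i+2(x_i-y_i)]_+$ to handle the regime where the learned scale $\sum_i\rho_i$ far exceeds the true distance to $y$; this is structurally not a truncated or reweighted Bregman term and would not emerge from your plan as written.
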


This bound is the best possible by the $\Omega(\log n)$ lower bound for the special case of randomized MTS \cite{BLS92}. Our algorithm here is deterministic as randomization does not help here\footnote{Any randomized algorithm for MAP can be derandomized by tracking its expected location. As cost functions are convex, this can only decrease the algorithm's cost.}.

To show Theorem \ref{thm:non-refined}, a key new idea is to handle the  scale-freeness of MAP as described above. We start by decoupling the position $x_i$ and its rate of change $x_i'$ by using separate {\em rate} variables $\rho_i$ that determine $x_i'$. The question then is how to update $\rho_i$. This will be driven by trying to learn the active region and scale.
We give a more detailed overview in Section \ref{sec:firstAlgo}.
To illustrate the main ideas without too many technical details, we will prove Theorem~\ref{thm:non-refined} first for the case of unweighted stars (i.e., uniform metrics).

Theorem \ref{thm:non-refined} 
has an interesting consequence for the natural case convex function chasing described above, as it gives a competitive ratio logarithmic in $n$. For convex function chasing in general, there is an $\Omega(\sqrt{n})$  lower bound \cite{FriedmanL93,BubeckKLLS20}.

\paragraph{Refined guarantees.}
A key first step in extending the results for MTS and $k$-server from star metrics to HSTs is to obtain a more refined  algorithm that has $(1+\epsilon)$-competitive service cost and 
$\text{poly}(\log n,1/\epsilon)$-competitive  movement cost, for $\epsilon \approx 1/\log n$.
 Here, we say that an algorithm has \emph{$\alpha$-competitive service cost} and \emph{$\beta$-competitive movement cost} if, up to some fixed additive constant, its service cost is at most $\alpha$ times the {\em total} (movement plus service) offline cost and its movement cost is at most $\beta$ times the total offline cost.

The reason is that when the algorithm for uniform metrics is used recursively to obtain an algorithm on HSTs, then roughly, the service cost guarantee multiplies across levels and the movement cost guarantee increases additively. See, e.g.,~\cite{BartalBBT97,BansalBMN15}
for more details.

In Section~\ref{sec:refined}, we give an improved algorithm for weighted stars with such refined guarantees. 
\begin{theorem}\label{thm:refined}
	For any $\epsilon>0$, there exists an algorithm for MAP on weighted stars with $(1+\epsilon)$-competitive service cost and $O\left(\frac{1}{\epsilon}+\log n\right)$-competitive movement cost.
\end{theorem}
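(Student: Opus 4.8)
The plan is to instantiate the algorithm as a continuous-time mirror-descent process whose regularizer is \emph{time-varying} — built from the rate variables $\rho$ used in the proof of Theorem~\ref{thm:non-refined} — and to bound its service and movement costs \emph{separately}: the service cost by a direct charging argument and the movement cost via a modified Bregman-divergence potential. By the footnote we may assume each request presents a single non-increasing convex $c_t$ at one leaf $r_t$ at distance $w_{r_t}$ from the center, with movement cost from $x(t-1)$ to $x(t)$ equal to $\sum_i w_i|x_i(t)-x_i(t-1)|$ up to a factor $2$. I would process request $t$ by a continuous flow that pumps mass into $r_t$ while draining each other leaf $i$ at a rate proportional to $\rho_i/w_i$ — so the \emph{rate} of change of $x_i$ is decoupled from its \emph{value} — keeping $\rho$ normalized ($\sum_i\rho_i=\Theta(1)$) with a floor $\rho_i\ge\delta=\Theta(1/n)$, and letting $\rho$ itself evolve slowly so as to track which leaves are currently active and at what scale, as in the unrefined algorithm. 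One checks that this flow is the mirror-descent step for an entropic regularizer whose local geometry at the current point is governed by $\rho$ (concretely, $R_t(x)=\sum_i w_i(x_i+\rho_i)\ln(x_i+\rho_i)$), which is time-varying since $\rho$ moves. The only place $\epsilon$ enters the algorithm is the stopping rule: I continue the flow on request $t$ only while the marginal service saving $-c_t'(x_{r_t})$ exceeds $\Theta(\epsilon)$ times the marginal movement-plus-regularizer cost of the induced out-flow, so the algorithm is willing to pay up to $\Theta(1/\epsilon)$ units of movement for one unit of service cost it buys down.

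\textbf{Service cost.} For each request I would charge the excess $\bigl(c_t(x_{r_t}(t))-c_t(x^*_{r_t}(t))\bigr)_+$ of the algorithm's service over the offline optimum $x^*$. This is nonzero only when $x_{r_t}(t)<x^*_{r_t}(t)$, in which case convexity bounds it by $-c_t'(x_{r_t}(t))\cdot\bigl(x^*_{r_t}(t)-x_{r_t}(t)\bigr)$; the stopping rule makes the first factor $O(\epsilon\, w_{r_t})$, and a short telescoping/charging argument on the weighted discrepancy $\sum_i w_i(x^*_i-x_i)_+$ gives $\sum_t w_{r_t}\bigl(x^*_{r_t}(t)-x_{r_t}(t)\bigr)_+ = O(1)\cdot\OPT$ — intuitively, because the algorithm pushes mass toward requested leaves, whenever offline holds strictly more mass there it must have paid for it. Summing yields $S_{\mathrm{alg}}\le S_{\mathrm{opt}}+O(\epsilon)\,\OPT$, which is $(1+\epsilon)$-competitive after rescaling $\epsilon$; the part of the flow before the threshold is reached only lowers the algorithm's own service cost, so it is irrelevant here.

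\textbf{Movement cost.} Let $\Phi_t = D_{R_t}(x^*(t)\,\|\,x(t)) + (\text{correction terms})$: the Bregman divergence of the current regularizer plus terms compensating for the time-dependence of $R_t$. I would prove the amortized inequality
\[
dM_{\mathrm{alg}} + d\Phi \;\le\; O\!\left(\tfrac1\epsilon+\log n\right)\bigl(dM_{\mathrm{opt}}+dS_{\mathrm{opt}}\bigr),
\]
by separating into: (i) offline moves, where $\partial_{x^*_i}D_{R_t}(x^*\|x) = w_i\ln\frac{x^*_i+\rho_i}{x_i+\rho_i}$ has magnitude $O\!\bigl(w_i\log\tfrac1\delta\bigr)=O(w_i\log n)$ (and the correction terms are built to have small $x^*$-derivative), so $d\Phi\le O(\log n)\,dM_{\mathrm{opt}}$; (ii) the algorithm's flow during a request, where the mirror-descent computation relates the movement rate to $d\Phi$ and the service rates and where the $\Theta(\epsilon)$ threshold drives the algorithm into the near-flat part of $c_t$, costing up to $\Theta(1/\epsilon)$ units of movement per unit of service removed; and (iii) the drift $\partial_t R_t$ from updating $\rho$, which the correction terms are designed to absorb. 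Since $0\le\Phi_t\le\Phi_{\max}$ with $\Phi_{\max}$ a finite constant depending only on $n$, $\epsilon$ and the weights, integrating gives $M_{\mathrm{alg}}\le O(1/\epsilon+\log n)\,\OPT+O(1)$; combined with the service bound this proves the theorem.

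\textbf{Main obstacle.} The hard part is making (iii) and (i) coexist. Because the regularizer is time-varying, $D_{R_t}(x^*\|x)$ acquires a $\partial_t R_t$ term whenever $\rho$ moves, and — worse — the effective ``diameter'' of $R_t$, which is the constant multiplying $dM_{\mathrm{opt}}$ in (i), itself fluctuates over time, so one cannot feed $\Phi = D_{R_t}(x^*\|x)$ into the usual MTS-style analysis unchanged. The bulk of the work is designing correction terms that simultaneously (a) keep $\Phi$ nonnegative and bounded, (b) cancel the $\partial_t R_t$ drift, and (c) do not themselves grow when $x^*$ moves; this is exactly where one balances the floor $\delta=\Theta(1/n)$, responsible for the $\log n$, against the $\Theta(\epsilon)$ stopping threshold, responsible for the $1/\epsilon$. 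Everything else should be routine, including well-posedness of the continuous flow (the rates are locally Lipschitz away from the boundary and the floor on $\rho$ keeps them bounded, so the flow is well-defined and discretizes into an actual algorithm).
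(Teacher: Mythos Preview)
Your plan has the two analyses swapped relative to the paper, and the swap breaks the service-cost side.

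\textbf{Where it fails.} You bound the service-cost excess by $-c_t'(x_{r_t})\cdot(x^*_{r_t}-x_{r_t})_+$, use the stopping rule to make the first factor $O(\epsilon w_{r_t})$, and then assert that $\sum_t w_{r_t}(x^*_{r_t}(t)-x_{r_t}(t))_+=O(\OPT)$ ``by a short telescoping on $\sum_i w_i(x^*_i-x_i)_+$''. That telescoping does not produce this bound: the potential $P=\sum_i w_i(x^*_i-x_i)_+$ changes by amounts of movement, not by the gap $(x^*_{r_t}-x_{r_t})$ itself, so you cannot extract the cumulative weighted gap from it. Worse, the claimed inequality is simply false for your algorithm. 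Take $n=2$, $w_1=w_2=1$, offline fixed at $x^*=(0.6,0.4)$, and repeatedly request leaf~$1$ with $c_t(z)=\epsilon(0.6-z)_+$. The slope is $-\epsilon$, so your stopping rule fires immediately and $x$ never moves; offline pays zero service and zero movement, yet $\sum_t w_1(x^*_1-x_1)_+$ and the online service cost both grow linearly in the number of requests. The root cause is that a threshold stopping rule lets the algorithm stall on shallow cost functions, and nothing else in your description forces $x_{r_t}$ toward $x^*_{r_t}$ in that regime.

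\textbf{What the paper does instead.} There is no stopping rule; $\epsilon$ enters only through $\delta=1/\max\{n^2,e^{3/\epsilon}\}$ inside the regularizer, and the algorithm always moves at the mirror-descent rate. The (modified) Bregman potential is used for the \emph{service} cost: one proves directly that $\alpha+D'\le(1+\epsilon)[\alpha+x_r-y_r]_+$, which is exactly the $(1+\epsilon)$-competitive service bound. The movement cost is then handled by the easy observation that the baseline dynamics guarantee $w_r x_r'\le O(\eta)\,w_r|b_r'|$ and $w_r[b_r']_+\le\alpha$, so online movement is at most $O(\eta)$ times online service; this needs only the trivial potential $2\eta\sum_i w_i b_i$. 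Also note that your proposed regularizer $R_t(x)=\sum_i w_i(x_i+\rho_i)\ln(x_i+\rho_i)$ has Hessian $w_i/(x_i+\rho_i)$, which gives drain rates proportional to $(x_i+\rho_i)/w_i$, not to $\rho_i/w_i$ as you state; the paper's regularizer has Hessian $w_i/\big(\eta(b_i-x_i+\delta S)\big)$, i.e.\ entropy in the \emph{gap} $b_i-x_i$, which is what makes the rate depend on the learned scale rather than on $x_i$ itself.
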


We 
believe that the refined guarantees in Theorem~\ref{thm:refined} are an important step towards solving MAP on general metrics, and we leave the resolution of this question as an intriguing open problem. 
In Section~\ref{sec:conclusion}, we discuss why the MAP perspective might also help to improve the current bounds on the competitive ratio for $k$-server. 

\paragraph{Time-varying regularization.} Achieving $(1+\epsilon)$-competitive service cost requires additional ideas beyond those needed for Theorem \ref{thm:non-refined}. To describe our algorithm achieving Theorem~\ref{thm:refined}, we will switch in Section~\ref{sec:refined} to the view of \emph{regularization} and \emph{online mirror descent}, a framework that has been applied successfully to various online problems recently~\cite{BCN14,BCLLM18,BCLL19,BuchbinderGMN19,CoesterL19}, including recent breakthrough results for MTS and $k$-server.

However, while these previous works are based on a \emph{static} regularization function, our algorithm uses a regularizer that is \emph{time-varying}. This is necessary as the regularizer must adapt to the current scale and region over time.
This leads to substantial technical complications in the analysis. In particular, the default potential function for mirror descent analyses -- the \emph{Bregman divergence} -- is not well-behaved when the offline algorithm moves (actually, it is not even well-defined), and changes of the regularizer lead to uncontrollable changes of the potential. We show how several technical adaptations of the Bregman divergence can be used to obtain a modified potential function that has all the desired properties necessary to carry out the analysis.

\paragraph{Non-convex costs and arbitrary metric spaces.}
In Section~\ref{sec:nonconvex}, we consider the version of MAP where cost functions are allowed to be non-convex. This version of the problem has an exponentially worse competitive ratio:
\begin{theorem}\label{thm:LB}
On any $n$-point metric space, any deterministic algorithm for MAP with non-convex cost functions has competitive ratio at least $\Omega(n)$.
\end{theorem}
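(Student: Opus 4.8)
The plan is to use the extra freedom of non-convex cost functions to \emph{force} the algorithm to keep all of its resource at a single point at every time step, which collapses the problem to \emph{deterministic} Metrical Task Systems on the same metric $(M,d)$, and then to invoke the classical $\Omega(n)$ lower bound for deterministic MTS, which holds on every $n$-point metric. The entire difficulty of the convex version --- the algorithm can ``hedge'' by spreading its resource, and such hedging is exactly what keeps randomized/fractional MTS at only $\Theta(\log n)$ --- disappears once fractional states are outlawed.

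Concretely, I would take an arbitrary adversarial MTS instance on $(M,d)$, given by cost vectors $\alpha_1,\alpha_2,\dots\in\R_+^M$, and present it to the MAP algorithm as the cost functions
\[
 c_{t,i}(x)\;=\;\alpha_{t,i}\cdot x\;+\;\infty\cdot\1_{0<x<1},\qquad i\in M.
\]
Each $c_{t,i}$ is non-convex (an infinite spike on the open interval $(0,1)$), with $c_{t,i}(0)=0$ and $c_{t,i}(1)=\alpha_{t,i}$. Hence any algorithm of finite total cost must end every step with all resource at a single point $p_t\in M$; such an algorithm is literally an MTS algorithm, and its MAP service and movement costs at time $t$ are $\alpha_{t,p_t}$ and $d(p_{t-1},p_t)$ --- exactly the MTS service and movement costs. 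Conversely every ``single-point'' trajectory is MAP-feasible with the same cost and no other trajectory has finite cost, so the optimal MAP cost on this instance equals the optimal MTS cost. Therefore the competitive ratio of any deterministic MAP algorithm is at least that of deterministic MTS on $(M,d)$, which is $2n-1=\Omega(n)$ \cite{BLS92}; since this is true for every metric, the theorem follows. (Alternatively one may reduce from deterministic $k$-server with $k=n-1$, whose competitive ratio is $k$.)

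The point that needs care --- and which I expect to be the crux --- is the precise admissible class of cost functions in the non-convex setting. The spike functions above are not monotone, so the reduction is clean only if the non-convex version drops the monotonicity requirement (which I believe is the intended reading, and also what makes the matching $\ell_2^2$-based $O(n)$ upper bound meaningful). If monotonicity is kept, one encodes $k$-server indirectly as in Section~\ref{sec:understanding}: pin each coordinate's mass to an offset $a_i$ via the hard monotone function $\infty\cdot\1_{x<a_i}$, leave one unit of slack, run a deterministic $(n-1)$-server sequence inside the slack, and --- the delicate step --- add a further non-convex penalty forbidding the slack from being split into fractional pieces, so that the simulated server mass is genuinely integral. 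In all cases the heart of the matter is the same: it is exactly the non-convexity that rules out the fractional/hedged configurations, turning a $\Theta(\log n)$-hard problem into a $\Theta(n)$-hard one, and the metric-independence of the bound is inherited from the corresponding metric-independent lower bound for deterministic MTS (or $k$-server).
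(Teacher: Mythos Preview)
Your reduction has a genuine gap: the non-convex variant in this paper still requires cost functions to be \emph{non-increasing}; only convexity is dropped. This is explicit in the upper-bound analysis (Section~\ref{sec:nonconvex}), which uses ``the offline service cost is at least $\alpha_t\1_{x_{\ell_t}\ge y_{\ell_t}}$ since cost functions are non-increasing'', and the paper's own lower-bound construction uses only non-increasing step functions. Your spike functions $c_{t,i}(x)=\alpha_{t,i}x+\infty\cdot\1_{0<x<1}$ are not non-increasing, so the primary reduction is outside the model. Your fallback sketch has the same problem: any ``non-convex penalty forbidding the slack from being split'' must punish intermediate values more than both endpoints, hence cannot be monotone.

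The paper's argument avoids forcing integrality altogether. It uses the non-increasing threshold function $c(z)=\tfrac{1}{n^2}\cdot\1_{z<1/(n-1)}$ and always requests the point $r_t=\arg\min_i x_i(t-1)$. Since $\sum_i x_i=1$, the minimum coordinate is at most $1/n<1/(n-1)$, so the online algorithm pays at least $1/n^2$ every step (either the service cost, or at least $1/(n-1)-1/n\ge 1/n^2$ movement, using that the minimal distance is $1$). Offline sits at the fractional configuration $y_{i^*}=0$, $y_i=1/(n-1)$ for $i\ne i^*$, where $i^*$ is the least-requested point, and pays on at most a $1/n$ fraction of steps. The key idea you are missing is that non-convexity is exploited not to outlaw fractional states, but to create a hard threshold that the \emph{minimum} coordinate of any simplex point necessarily violates --- so the adversary can always hit the algorithm while a well-chosen fractional offline configuration clears the threshold everywhere except one point.
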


On tree metrics, we complement this lower bound with a matching upper bound:
\begin{theorem}\label{thm:nonconvexTrees}
There is an $O(n)$-competitive deterministic algorithm for MAP on tree metrics, even if the cost functions may be non-convex.
\end{theorem}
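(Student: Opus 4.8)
The plan is to give a continuous-time algorithm based on online mirror descent with a weighted $\ell_2^2$ regularizer adapted to the tree, and to bound its cost against the offline optimum by a potential-function argument that loses a factor $\Theta(n)$. As a preliminary reduction it suffices to handle \emph{threshold cost functions}: any non-increasing $c\colon[0,1]\to\R_+$ satisfies $c(y)-c(1)=\int_0^1\mathbf 1_{\{y<s\}}\,d\nu(s)$ for a finite non-negative measure $\nu$ on $(0,1]$, and the leftover constant $c(1)$ is paid by every solution including the optimum, so (together with the single-point reduction of the appendix) we may assume each request is specified by a point $r_t$, a level $\theta_t\in(0,1]$ and a rate, with the algorithm paying at that rate for as long as $x_{r_t}<\theta_t$; moreover these requests may be assumed to arrive one at a time, since the threshold pieces of one cost function all sit at $r_t$ and can be processed in sequence using only instantaneous moves. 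This is exactly fractional $k$-server except that $\theta_t$ is now arbitrary, and it is precisely this lack of a fixed scale $1/k$ that kills the multiplicative updates used for $k$-server and forces the $\ell_2^2$ choice.

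For the algorithm, root the tree and, for an edge $e$ of length $\ell_e$, let $f_e(x)$ be the mass that $x$ puts in the subtree below $e$, so the tree Earth-mover distance is $\sum_e\ell_e|f_e(x)-f_e(y)|$. The algorithm keeps $x(t)\in\Delta$ and, while a request at $r_t$ is active (i.e.\ $x_{r_t}<\theta_t$), follows the mirror-descent flow of the service objective in the geometry of a regularizer that is $\ell_2^2$ in the edge-flow coordinates (e.g.\ $\Phi(x)=\tfrac12\sum_e\ell_e f_e(x)^2$); equivalently, one can run a recursion over the tree in which every internal node applies a weighted-star $\ell_2^2$ rule to split its incoming mass among its children. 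Concretely, mass streams from the other leaves toward $r_t$ along tree paths at rates dictated by $\Phi$ and the constraint $x\in\Delta$, stopping when $x_{r_t}=\theta_t$ or no more mass is available. For the analysis I would use $\Psi(t)=cn\cdot B(t)$, where $B(t)$ is a Bregman-type dissimilarity between $x(t)$ and the offline configuration $x^\ast(t)$ measured in the $\Phi$-geometry, and prove the two standard inequalities: when $x^\ast$ is stationary the algorithm's instantaneous cost (movement plus service) is at most $-\dot\Psi$, with the factor $n$ coming from the ratio of the $\Phi$-diameter of $\Delta$ to the strong-convexity modulus; and when $x^\ast$ moves or pays service, $\dot\Psi$ grows by at most $O(n)$ times the offline cost. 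Integrating over the request stream gives the $O(n)$ bound, which is tight by Theorem~\ref{thm:LB}.

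I expect the main obstacle to be the interaction of non-convexity with the movement-cost accounting. Since a cost function can be flat just above $x_{r_t}$ and only drop much further away, a purely local marginal-cost comparison is useless, so one must show that the mirror-descent response either drives $x_{r_t}$ all the way up to $\theta_t$ or that not doing so is provably cheap, and then that the resulting, possibly large, mass movements still sum to only $O(n)\cdot\OPT$ rather than more. A second, more technical difficulty — already present for convex costs, as noted in the excerpt — is that a Bregman divergence is ill-behaved (and not even defined on $\partial\Delta$) when $x^\ast$ moves; I would handle this with the same kind of modification of the potential as in Section~\ref{sec:refined}, here combined with a transportation term between $x(t)$ and $x^\ast(t)$ along the tree (as in deterministic tree $k$-server), and then re-check that the modified potential still dominates and still drops at the required rate.
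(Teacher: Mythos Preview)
Your algorithm is essentially the paper's: mirror descent on the tree with the weighted $\ell_2^2$ regularizer $\Phi(x)=\tfrac12\sum_u w_u x_u^2$ in the subtree-mass coordinates, driven by the instantaneous service cost at the requested leaf. That part is fine, and the threshold-function reduction is harmless (the paper skips it by simply using $\alpha_t=c_t(x(t))$ as the control magnitude).

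The gap is in the analysis. You propose $\Psi=cn\cdot B$ with $B$ a Bregman-type quantity, to be patched with the Section~\ref{sec:refined} modifications plus a transportation term. This will not work, and the paper says so explicitly: for non-convex (threshold) costs the Bregman inequality $B'\le\langle f,x-y\rangle=\alpha_t(x_{\ell_t}-y_{\ell_t})$ is useless, because the offline can sit at $y_{\ell_t}=\theta_t$ just above $x_{\ell_t}$, pay zero service cost, and yet $x_{\ell_t}-y_{\ell_t}$ is only infinitesimally negative. Scaling by $cn$ does not rescue this; you would need $cn(y_{\ell_t}-x_{\ell_t})\ge 1$, which is false. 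The Section~\ref{sec:refined} modifications address a different problem (a time-varying regularizer), not this one.

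What the paper does instead is abandon the Bregman divergence entirely and use two bespoke potentials: a one-sided transportation potential $P=\sum_u w_u[x_u-y_u]_+$ and a ``weighted depth'' potential $Q=-\sum_u n_u w_u x_u$, with overall potential $(2n-1)P+2Q$. The key structural lemma is that all flow has $\ell_t$ as its unique sink (Lemma~\ref{lem:flow}); from this one shows $P'\le\alpha_t\mathbf 1_{\{x_{\ell_t}\ge y_{\ell_t}\}}-\lambda_{\rt}$ and $Q'=n\lambda_{\rt}-\alpha_t$, and the $(2n-1,2)$ combination makes the $\lambda_{\rt}$ terms cancel while leaving $(2n-1)\alpha_t\mathbf 1_{\{x_{\ell_t}\ge y_{\ell_t}\}}$ on the right, which is at most $(2n-1)$ times the offline service cost since cost functions are non-increasing. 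The factor $n$ arises concretely from the identity $n_{\rt}=n$ in the computation of $Q'$, not from a diameter-versus-strong-convexity ratio. Your ``transportation term'' intuition points toward $P$, but $P$ is the main potential, not a correction to Bregman, and it only closes with the help of $Q$.
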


By known tree embedding techniques, this implies the following result for general metrics:
\begin{corollary}\label{cor:nonconvexGeneral}
There is an $O(n^2)$-competitive deterministic and $O(n\log n)$-competitive randomized algorithm for MAP on arbitrary metric spaces, even if the cost functions are non-convex.
\end{corollary}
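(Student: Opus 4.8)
\textbf{Proof proposal for Corollary~\ref{cor:nonconvexGeneral}.}
The plan is to reduce MAP on a general $n$-point metric $M$ to MAP on a tree metric and invoke Theorem~\ref{thm:nonconvexTrees}, losing only the distortion of the embedding and a constant factor. For the randomized bound I would use the FRT embedding \cite{FRT04}: a distribution $\mathcal D$ over tree metrics $T$, each with $O(n)$ vertices, such that $d_T(u,v)\ge d_M(u,v)$ for all $u,v\in M$ and $\mathbb E_{T\sim\mathcal D}[d_T(u,v)]\le O(\log n)\,d_M(u,v)$. For the deterministic bound I would use a single dominating tree metric $T$ on $O(n)$ vertices with $d_M(u,v)\le d_T(u,v)\le O(n)\,d_M(u,v)$; such an embedding is well known and can be obtained, e.g., by taking the bottleneck (single-linkage) ultrametric of $M$ and rescaling it by $n-1$: this is dominating with distortion $n-1$ and, being an ultrametric, is an HST, hence a tree metric on $O(n)$ vertices. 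In both cases the points of $M$ are a subset of the vertices of $T$, the remaining vertices being Steiner nodes carrying no tasks.

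Given the tree $T$ (sampled once from $\mathcal D$ in the randomized case), the algorithm translates the request sequence on $M$ verbatim into a request sequence on $T$ (each task placed at the corresponding vertex), runs the $O(n)$-competitive algorithm of Theorem~\ref{thm:nonconvexTrees} on $T$ to obtain allocations $x(t)\in\Delta(T)$, and outputs the aggregated allocation $\bar x(t)\in\Delta(M)$ that moves, for every Steiner node $v$, its resource $x_v(t)$ to a fixed representative point of $M$ chosen in the subtree of $v$. The analysis has two parts. First, our cost on $M$ is at most $O(1)$ times the tree algorithm's cost on $T$: since aggregation only adds resource at points of $M$ and cost functions are non-increasing, the service cost of $\bar x(t)$ is at most that of $x(t)$ (this is the only place we use that Theorem~\ref{thm:nonconvexTrees} permits arbitrary monotone, possibly non-convex costs), while the movement cost of $\bar x$ is within a constant factor of that of $x$ because the aggregation map is, up to a constant, non-expansive from $d_T$ to $d_M$ — each representative lies in the corresponding subtree and in an HST the distance from a node to a leaf in its subtree is within a constant factor of the distance to its parent. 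Second, $\OPT_T\le\alpha\cdot\OPT_M$, where $\alpha$ is the distortion: the optimal $M$-solution is feasible on $T$ (its points are vertices of $T$), its service cost is unchanged, and its movement cost scales by at most $\alpha$. Chaining these with the $O(n)$-competitiveness on $T$ gives, deterministically, cost $\le O(1)\cdot O(n)\cdot\OPT_T\le O(n^2)\,\OPT_M$. In the randomized case the optimal $M$-solution $y^\star$ is fixed before $T$ is drawn, so by linearity of transportation distance in the metric and the FRT guarantee, $\mathbb E_T[\OPT_T]\le\mathbb E_T[\text{cost}_T(y^\star)]\le O(\log n)\,\OPT_M$, whence $\mathbb E_T[\text{cost}]\le O(1)\cdot O(n)\cdot O(\log n)\,\OPT_M=O(n\log n)\,\OPT_M$.

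Since this is a reduction to an already-proved theorem, there is no genuine obstacle; the two points needing care are (i) keeping the number of vertices of $T$ down to $O(n)$ (rather than, say, $O(n\log A)$) so that the tree algorithm is truly $O(n)$-competitive, which holds for FRT trees and for the rescaled bottleneck ultrametric after contracting degree-$2$ Steiner nodes, and (ii) making the map-back from $\Delta(T)$ to $\Delta(M)$ lossless for service cost and only constant-factor for movement cost despite the Steiner nodes, which relies on choosing each Steiner node's representative inside its own subtree together with the HST structure of $T$.
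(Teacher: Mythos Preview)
Your proposal is correct and follows essentially the same route as the paper: embed the metric into a dominating tree metric (randomly via FRT for the $O(n\log n)$ bound, deterministically for the $O(n^2)$ bound) and invoke Theorem~\ref{thm:nonconvexTrees}. The one notable difference is your choice of deterministic embedding: the paper simply takes the minimum spanning tree $T$ of $(M,d)$, observing that $d\le d_T\le (n-1)d$ since any two points are joined in $T$ by a path of at most $n-1$ edges, each of length at most their direct distance. This is a bit cleaner than your rescaled bottleneck ultrametric, because the MST is a tree on the points of $M$ themselves, so no Steiner vertices arise at all and your aggregation step becomes unnecessary. (In the FRT case too, after contracting degree-$2$ internal nodes the leaves are exactly the points of $M$, so the tree allocation is already an allocation on $M$; your ``map-back'' discussion, while not wrong, is more machinery than needed.)
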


\paragraph{$\boldsymbol{\ell_2^2}$-regularization.} Our algorithm achieving the tight guarantee on trees is also based on mirror descent, but again with a crucial difference to previous mirror-descent based online algorithms in the literature. While previous algorithms all used some version of an entropic regularizer, our regularizer is a weighted $\ell_2^2$-norm. Here, again, the Bregman divergence is not suitable as a potential function, but the issues are more fundamentally rooted in the non-convex structure of cost functions, and addressing them with tweaks to the Bregman divergence seems unlikely to work. Instead, our analysis uses two different potential functions, one of which resembles ideas of ``weighted depth potentials'' used in \cite{BCLLM18,BCLL19} and the other of which is a kind of ``one-sided matching''.

\subsection{Other related work}
A closely related  convex allocation problem was introduced in \cite{BansalBN10Towards} and subsequently with additional ideas in \cite{BansalBMN15} to obtain the first poly-logarithmic guarantees for $k$-server  on general metrics. This problem corresponds to the special case of MAP on a uniform metric when the convex functions $c_{t,i}$ are piece-wise linear determined by values $c_{t,i}(j)$ at $j=0,1/k,2/k,\ldots,1$. 
Roughly, this models the cost of having $j$ servers at location $i$. They gave an algorithm with $(1+\epsilon)$-competitive service cost $O(\log k/\epsilon)$-competitive movement cost.

Notice that this problem does not suffer from scale-freeness as the cost functions have scale $1/k$. In fact, MAP corresponds to this problem as $k\to\infty$. However, in this case the $O(\log k)$ bound of \cite{BansalBN10Towards} does not give anything useful.

The convex function chasing problem was introduced in \cite{FriedmanL93} and has seen remarkable progress recently. The first competitive algorithm for arbitrary dimension $n$ was given in \cite{BubeckLLS19}, and an optimal competitive ratio of $O(n)$ was shown in \cite{Sellke20,ArgueGGT21}.

\subsection{Organization}
In the next section, we first define an equivalent version of MAP that will be easier to work with. We will give a first algorithm for unweighted stars in Section~\ref{sec:firstAlgo}, where we also describe the ideas to overcome scale-freeness, which requires decoupling the update rate of a variable from its value. In Section~\ref{sec:refined}, we give a modified algorithm for weighted stars, which is induced by mirror descent with a time-varying regularizer. We show how to adapt mirror descent analysis techniques to the complications resulting from the time-varying regularizer and prove Theorems~\ref{thm:non-refined} and \ref{thm:refined}. In Section~\ref{sec:nonconvex}, we prove the upper and lower bounds for non-convex functions on general metrics. In particular, this section includes our $\ell_2^2$-regularized tight algorithm for trees. Section~\ref{sec:conclusion} concludes with a discussion about extensions of MAP to general metrics and potential impact on the $k$-server problem.

\section{Preliminaries}\label{sec:prelim}
For $a\in\R$, we will often write $[a]_+:=\max\{a,0\}$. A metric space $M$ is called a \emph{weighted star} if there are weights $w_i>0$ for $i\in M$ and the distance between two points $i\ne j$ is given by $d(i,j)=w_i+w_j$.

\paragraph{Continuous-time model and simplified cost functions.}

It will be more convenient to work with the following continuous-time version of MAP.  Instead of cost functions being revealed at discrete times $t=1,2,\dots$, we think of cost functions $c_t$ arriving continuously over time, and that $c_t$ changes only finitely many times.  At any time $t \in [0,\infty)$, the algorithm maintains a point $x(t) \in \Delta$, where $\Delta:=\{x\in\R_+^M\colon \sum_{i\in M}x_i=1\}$, and 
the dynamics of the algorithm is specified by the derivative $x'(t)$ at each time $t$. The movement cost and the service cost are given by $\int_0^\infty \norm{x'(t)} dt \text{ and }  \int_0^\infty c_t(x(t)) dt$ respectively, where the norm $\norm{\cdot}$ is induced by the metric.\footnote{In general, $\norm{z}=\min_f \sum_{i,j\in M}f(i,j)d(i,j)$, where the minimum is taken over all flows $f\colon M\times M\to \R_+$ satisfying $z_i=\sum_{j\in M}(f(j,i)-f(i,j))$ for all $i\in M$.}  On a weighted star metric, the norm $\norm{\cdot}$ is given by $\norm{z}:=\sum_i w_i |z_i|$.

Further, we assume that the cost functions $c_t$ are of the simple form $c_t(x)=[s_t-x_{r_t}]_+$ for some  $s_t\in[0,1]$ and $r_t\in M$. In other words, $c_{t,i}$ is non-zero for only a single location $r_t\in M$ at any time, and $c_{t,i}$ is piecewise linear and with slope $-1$ in the area where it is positive (see Figure~\ref{fig:simplified_cost}(a)). 
A useful consequence of this view is that if the service cost $c_t(x(t))$ incurred by the online algorithm is $\alpha_t$, then the (one-dimensional) cost function $c_{t,r_t}$ intercepts the $x$-axis (becomes $0$) at the point $s_t=x_{r_t}(t)+\alpha_t$. The cost of an offline algorithm at $y\in\Delta$ is then given by $[\alpha_t+x_{r_t}(t)-y_{r_t}]_+$. For a cost function $c_t$ of this form, we will say that \emph{$x(t)$ is charged cost $\alpha_t$ at $r_t$}.

To simplify the description and analysis of our algorithms, we will further allow them to decrease $x_i$ to a negative value. In Appendix~\ref{app:simplified} we argue that these assumptions are all without loss of generality.
\section{A first algorithm for uniform metrics}\label{sec:firstAlgo}
We will describe here the $O(\log n)$-competitive algorithm for MAP on uniform metrics (i.e., unweighted stars). The algorithm also extends to the weighted star case, but we will assume that all weights are $1$ to avoid technicalities and focus on the key ideas.
In the next section, we will give an algorithm for weighted stars that additionally achieves near-optimal \emph{refined guarantees}.

\subsection{Overview} Before the formal description, we first give a high-level overview and some intuition behind the ideas needed to handle the difficulties due to scale-freeness.

Fix a time $t$, and let $x = x(t)$ and $y = y(t)$ denote the online and offline position and suppose a cost of $\alpha = \alpha_t$ is received (charged) at point $r = r_t$.  We drop $t$ from now for notational ease, as everything is a function of $t$.  Clearly, an algorithm that wishes to be competitive must necessarily increase $x_r$ (otherwise the offline algorithm can move to some $y$ with $y_r>x_r$ and keep giving such cost functions forever). So, the key question is how to decrease other coordinates $x_i$ for $i\neq r$ to offset the increase of $x_r$ (and maintain $\sum_i x_i=1$).

\paragraph{Separate rate variables and how to update them.}
As discussed in Section \ref{sec:understanding}, due to the scale-free property, the rate of update of $x_i$, denoted $x_i'$, cannot simply be some function of $x_i$, as in the usual multiplicative update based algorithms.
So we maintain separate {\em rate} variables $\rho_i$ (decoupled from $x_i$) that specify the rate at which to reduce $x_i$, i.e., we will have $x_i' = -\rho_i$  (plus a small additive term and suitably normalized, but we ignore this technicality for now). Now the key issue becomes how to update these $\rho_i$ variables themselves? 

Consider the following two scenarios, which suggest two conflicting updates to $\rho_r$, depending upon the offline location $y_r$.

(i) $x_r < y_r$.  Here, the adversary can make us incur the service cost while possibly not paying anything itself. However this is not problematic, as we will increase $x_r$ and hence get closer to $y_r$.  Moreover, decreasing $\rho_r$ is also the right thing to do, as it will prevent us in future from decreasing $x_r$ again too fast and move away from $y_r$ when requests arrive at locations $i\neq r$.

(ii) $y_r < x_r$. Here, increasing $x_r$ is possibly fine, as even though we are moving {\em away} from $y_r$, the offline algorithm is paying a higher service cost than online.  However, decreasing $\rho_r$ would be very bad, as this makes it much harder for online to catch up with the offline position $y_r$ later.

To summarize, in case (i), we should decrease $\rho_i$ and in case (ii), we should leave it unchanged or increase it.  The problem is of course that the algorithm does not know the offline location $y_r$, and hence which option to choose.

\begin{figure}[htbp]
    \centering
    \includegraphics{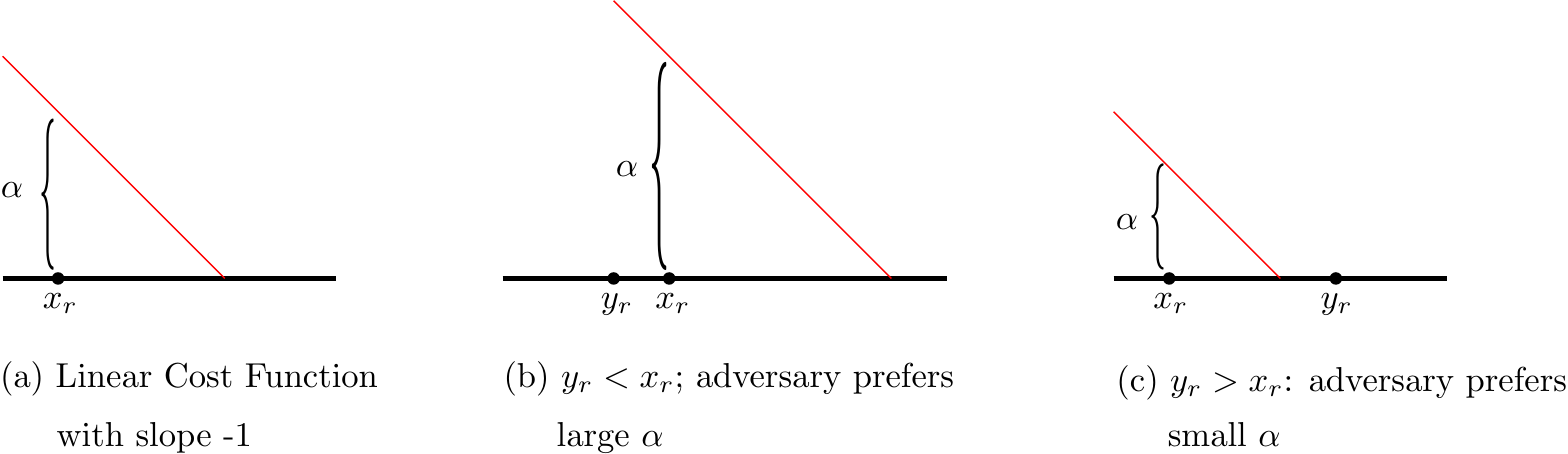}
    \caption{Illustration of cost functions.}
    \label{fig:simplified_cost}
\end{figure}

\paragraph{Indirectly estimating $\boldsymbol{y_r}$.}
We note that even though the algorithm does not know $y_r$, it can reasonably estimate whether $y_r< x_r$ or not, by looking at the structure of requests from the adversary's point of view.  Suppose $y_r < x_r$ (see Figure~\ref{fig:simplified_cost}(b)). In this case, the adversary will want to give us requests with {\em large} $\alpha$; because for small $\alpha$, the offline algorithm pays much higher cost in proportion to that of online; indeed, as $\alpha$ gets larger, the ratio between the offline to online service costs tends to $1$. 
On the other hand, if $y_r>x_r$ (see Figure \ref{fig:simplified_cost}(c)),  the adversary will tend to give requests so that $x_r + \alpha \leq y_r$ (so offline incurs no service cost at all), and hence keep $\alpha$ {\em small}. 

A problem however is that even though the online algorithm sees $\alpha$ when the request arrive, whether this $\alpha$ is large or small has no intrinsic meaning as this depends on the current scale at which the adversary is giving the instance.
 The final piece to make this idea work is that the algorithm will also try to learn the scale at which the adversary is playing its strategy. We describe this next.
 
\paragraph{A region estimate.}

At each time $t$, and for each point $i$, we maintain a real number $b_i\geq 0$ where $b_i \geq x_i$.
Intuitively, we can view $b_i$ as an (online) estimate of the ``region'' where the adversary is playing the strategy, and we set the rate variable $\rho_i = b_i -x_i$. The observations above now suggest the following algorithm.
For an incoming request at point $r$, 

(i) if $x_r+\alpha \leq  b_r$ (this corresponds to small $\alpha$ in the discussion above, which suggested that $y_r > x_r$) the algorithm increases $x_r$ and decreases $b_r$ at roughly the same rate (and hence decreases $\rho_r= b_r-x_r$ at roughly twice the rate), and

(ii) if $x_r +\alpha > b_r$ (this corresponds to $\alpha$ being large, which suggested that $y_r < x_r$), the algorithm increases both $b_r$ and $x_r$ at roughly the same rate (and $\rho_r$ only increases slowly).
Note that even though the location $r$ is incurring a service cost, we do not necessarily panic and decrease $\rho_r$.

For other points $i\neq r$, $b_i$ stays fixed. So as $x_i$ decreases, this corresponds to increasing the rate $\rho_i$. 
This step is analogous to multiplicative updates, but where the update rate is given by distance of $x_i$ from $b_i$, where $b_i$ itself might change over time. We also call $b_i$ the ``baseline''.

A complication (in the analysis) will be that the $b_i$ themselves are only estimates and could be wrong. E.g., even if $b_i$ is accurate at some given time, the offline algorithm can move $y_i$ somewhere far at the next step, and start issuing requests in that region. The current $b_i$ would be completely off now and the algorithm may make wrong moves. What will help here in the analysis is that algorithm is quickly trying learn the new $b_i$.

\begin{figure}
    \centering
    \includegraphics{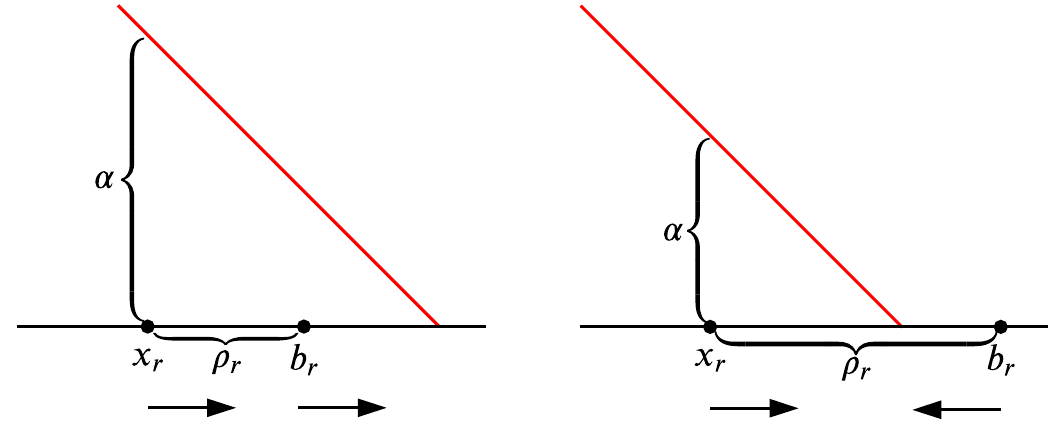}
    \caption{Illustration of the update rule for $\rho_r$. If $\alpha>\rho_r$ (left), then $x_r$ and $b_r$ increase and $\rho_r$ changes slowly. If $\alpha<\rho_r$ (right), then $x_r$ and $b_r$ move towards each other and $\rho_r$ decreases. }
    \label{fig:b_update}
\end{figure}

\subsection{Formal description of the algorithm}
At any instantaneous time $t$, the algorithm maintains a point $x(t) \in \Delta$. In addition, it also maintains a point $b(t) \in \R^M$, where $b_i(t) \geq x_i(t)$ for all $i \in M$. 
Let $\rho_i(t)= b_i(t) -x_i(t)$, for each $i \in M$. We specify the formal algorithm by describing how it updates the points $x(t)$ and $b(t)$ in response to a cost function $c_t$. Again, we drop $t$ from the notation hereafter.

Suppose the cost function at a given time charges cost $\alpha$ at point $r$. Then, we increase 
$x_{r}$ at rate $\alpha$  and simultaneously decrease all $x_i$ (including $x_{r}$) at rate
\[ -\alpha \cdot \frac{\rho_i+\delta S}{2S}
 \]
 where $\delta = 1/n$ and $S:=\sum_i\rho_i$. Intuitively, we can think of $S$ as the current \emph{scale} of the problem (which is changing over time).

Then the overall update of $x$ can be summarized as
\begin{align}
\label{eq:xupdate}
 x_i'=\alpha\left( \1_{i=r}-\frac{\rho_i+\delta S}{2 S}\right)\qquad \text{for all $i \in M$.} 
\end{align}
The baseline vector $b$ is updated as follows:

(i) For $i \neq r$, $b_i$ stays fixed. 

(ii) For $i=r$, if $x_r + \alpha > b_r$ (or equivalently $\alpha> \rho_r$), then $b_r' = \alpha$. 

(iii)
For $i=r$, if $x_r+\alpha \leq b_r$ (or equivalently $\alpha \leq \rho_r)$, then $b_r' = -\alpha$.  \footnote{Strictly speaking, if $x_r+\alpha=b_r$ both rules (ii) and (iii) apply simultaneously and $b_r$ stays fixed.}

See Figure \ref{fig:b_update} for an illustration. 
Notice that the update rules for $x_i'$ and $b_i'$ also ensure that $b_i \geq x_i$. Finally, using that $\rho_i = b_i-x_i$ and  
writing compactly, this
gives the following update rule for $\rho_i$.
\begin{equation}
 \rho_i'=\alpha\left(\frac{\rho_i+\delta S}{2 S} - 2\cdot \1_{i=r \text{ and }  \rho_r>\alpha}\right).  
\end{equation}
This completes the description of the algorithm. 

\paragraph{Feasibility.}

Notice that
 $\sum_i  (\rho_i + \delta S)/ (2 S) = 1$
and hence $\sum_i x_i'=0$ and the update for $x$ maintains that $\sum_i x_i=1$.
In the algorithm description above, we do not explicitly enforce that $x_i \geq 0$. As we show in Appendix~\ref{app:simplified}, allowing the online algorithm to decrease $x_i$ to negative values is without loss of generality. It is possible to enforce this directly in the algorithm, but this would make the notation more cumbersome. 

\subsection{Analysis sketch}
The analysis of the algorithm is based on potential functions. Specifically, we define a (bounded) potential 
$\Theta$ that is a function of the online and offline states, and show that at any time $t$ it satisfies
\begin{equation}
\label{eq:pot}
    \text{On}' + \Theta' \leq O(\log n) \cdot \text{Off}',
\end{equation} 
where $\text{On}'$ (resp.~$\text{Off}'$) denote the change in cost of the online (resp.~offline) algorithm, and $\Theta'$ is the change in the potential.
The potential function $\Theta$ consists of two parts defined as
\begin{align*}
 &\text{(Primary potential)} &&P := \sum_{i\colon x_i \geq y_i} (b_i-y_i)\log\frac{(1+\delta)(b_i-y_i)}{\rho_i+\delta  (b_i-y_i)}. \\
&\text{(Scale-mismatch potential)}  &&Q := \sum_{i} \left[\rho_i + 2(x_i - y_i)\right]_+.
\end{align*}
where $y \in \Delta$ is the position of the offline algorithm. The overall potential is given by 
\[\Theta = 12P+6Q.\]
It is easy to verify, by taking derivative with respect to $y_i$ and using $\delta =1/n$, that $\Theta$ is $O(\log n)$-Lipschitz in $y_i$. Thus the potential increases by at most $O(\log n)$ times the offline movement cost when $y$ changes. It therefore suffices to show \eqref{eq:pot} for the case that $y$ stays fixed and only the online algorithm moves.

Recall that $S = \sum_i \rho_i$ and define $L:=\sum_i(x_i-y_i)_+$.  Since $P$ and $Q$ change continuously and their derivatives exist for almost all times, we can ignore times when the derivatives do not exist. We prove the following:
\begin{lemma}[Change of the primary potential]\label{lem:primaryChange}
When $y$ is fixed and the online algorithm moves, the change of $P$ is bounded by
\[
    P'\le O(\log n)[\alpha+x_r-y_r]_+ \,-\,\frac{\alpha}{2}\min\left\{L/S\,,\, 1\right\}.
\]
\end{lemma}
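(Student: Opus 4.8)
The plan is to differentiate $P=\sum_{i\colon x_i\ge y_i}(b_i-y_i)\log\frac{(1+\delta)(b_i-y_i)}{\rho_i+\delta(b_i-y_i)}$ with respect to time, holding $y$ fixed, and to bound the contribution of the requested point $r$ separately from the contributions of the other points $i\ne r$. First I would record the relevant instantaneous dynamics: $x_i'=\alpha(\1_{i=r}-\frac{\rho_i+\delta S}{2S})$, $b_i'=0$ for $i\ne r$, and $b_r'=\pm\alpha$ according to the two cases $\alpha>\rho_r$ or $\alpha\le\rho_r$; hence $\rho_i'=\alpha\frac{\rho_i+\delta S}{2S}$ for $i\ne r$ and $\rho_r'=\alpha(\frac{\rho_r+\delta S}{2S}-2\cdot\1_{\rho_r>\alpha})$. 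Since $b_i-y_i$ is constant for $i\ne r$, the summand for such $i$ has derivative $-(b_i-y_i)\cdot\frac{\rho_i'}{\rho_i+\delta(b_i-y_i)}$, which is \emph{negative}; the term for $r$ (and any boundary terms where $x_i=y_i$) must be handled by hand. I would also note the set $\{i\colon x_i\ge y_i\}$ changes only at measure-zero times, and when a coordinate enters/leaves, continuity of the summand at $x_i=y_i$ (where $b_i-y_i=\rho_i$, so the log is $\log(1+\delta)$, a bounded jump of size $(b_i-y_i)\log(1+\delta)=\rho_i\log(1+\delta)$) needs a brief remark; most cleanly one checks the one-sided derivatives are what we want, or absorbs such jumps into the $O(\log n)$ slack.

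The term at $r$ splits into two cases. \textbf{Case $\alpha>\rho_r$} (so $b_r'=\alpha$, $x_r'=\alpha(1-\frac{\rho_r+\delta S}{2S})$, $\rho_r'=\alpha(\frac{\rho_r+\delta S}{2S}-2)$, which may be negative): here I expect to bound the $r$-summand's derivative crudely by $O(\log n)\cdot\alpha$ and then observe that in this case $\alpha>\rho_r\ge0$, and the adversary-structure reasoning suggests $x_r$ is "ahead", so $[\alpha+x_r-y_r]_+$ is comparable to $\alpha$; thus $O(\log n)\alpha\le O(\log n)[\alpha+x_r-y_r]_+$ provided $x_r\ge y_r-O(1)\cdot$(something), or more carefully one shows the whole $r$-contribution is $\le O(\log n)[\alpha+x_r-y_r]_+$ by using that the log-factor $\log\frac{(1+\delta)(b_r-y_r)}{\rho_r+\delta(b_r-y_r)}$ is $O(\log n)$ in magnitude (lower bounded using $\delta=1/n$ and $b_r-y_r\le1$, upper bounded since $\rho_r\ge0$ forces the fraction $\le\frac{1+\delta}{\delta}=O(n)$) and that $(b_r-y_r)\,b_r'=(b_r-y_r)\alpha\le[\alpha+x_r-y_r]_+\cdot$ const when $x_r\ge y_r$, since $b_r\ge x_r$. \textbf{Case $\alpha\le\rho_r$} ($b_r'=-\alpha$, so $b_r-y_r$ decreases): now the $r$-summand behaves like the $i\ne r$ terms — both $b_r'<0$ and $\rho_r'$ (via the $-2\alpha$ no longer present; wait, here $\rho_r'=\alpha(\frac{\rho_r+\delta S}{2S}-2)<0$ too since $\rho_r\le\alpha$ makes... ) — I would verify it contributes a nonpositive (or $O(\log n)\alpha$-bounded) amount and does not spoil the negative drift.

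The heart of the lemma is extracting the negative term $-\frac\alpha2\min\{L/S,1\}$ from the sum $\sum_{i\ne r,\,x_i\ge y_i}-(b_i-y_i)\frac{\rho_i'}{\rho_i+\delta(b_i-y_i)}$ with $\rho_i'=\alpha\frac{\rho_i+\delta S}{2S}$. Substituting, each such term equals $-\frac{\alpha}{2S}\cdot\frac{(b_i-y_i)(\rho_i+\delta S)}{\rho_i+\delta(b_i-y_i)}$. The key pointwise inequality I would prove is $\frac{(b_i-y_i)(\rho_i+\delta S)}{\rho_i+\delta(b_i-y_i)}\ge x_i-y_i$ whenever $x_i\ge y_i$ — intuitively because $b_i-y_i=\rho_i+(x_i-y_i)$, so the numerator is $(\rho_i+(x_i-y_i))(\rho_i+\delta S)$ and the denominator is $\rho_i+\delta\rho_i+\delta(x_i-y_i)=\rho_i(1+\delta)+\delta(x_i-y_i)$, and one checks $(\rho_i+(x_i-y_i))(\rho_i+\delta S)\ge(x_i-y_i)(\rho_i(1+\delta)+\delta(x_i-y_i))$; after cancellation this reduces to $\rho_i^2+\rho_i\delta S+(x_i-y_i)\delta S\ge(x_i-y_i)\rho_i\delta+\delta(x_i-y_i)^2$, which holds because $\delta S\ge\delta\rho_i\ge\delta(x_i-y_i)$ term-by-term (using $S\ge\rho_i$ and $\rho_i=b_i-x_i\ge\dots$; need $x_i-y_i\le\rho_i$? not necessarily — but we also have the crude bound via just keeping $\rho_i(1+\delta)+\delta(x_i-y_i)\le(b_i-y_i)(1+\delta)$ hence the ratio is $\ge\frac{\rho_i+\delta S}{1+\delta}\ge$ ...). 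Summing over the qualifying $i$ and using $\sum_{i\colon x_i\ge y_i}(x_i-y_i)=L$ gives $\le-\frac{\alpha}{2S}(L-(x_r-y_r)_+\text{ if }r\text{ qualifies})$, i.e.\ essentially $-\frac{\alpha L}{2S}$ up to the $r$-term already accounted for; and the trivial cap by $-\frac\alpha2$ comes from the alternative bound $\frac{\rho_i+\delta S}{2S}\le1$ being the wrong direction, so instead one notes $L\le S+\|x-y\|_1/\cdots$ or simply that when $L\ge S$ we can replace $L/S$ by $1$ by only using a subset of the mass. \textbf{The main obstacle} I anticipate is making the two regimes $L\le S$ and $L>S$ cohere cleanly — i.e.\ getting exactly $\min\{L/S,1\}$ rather than a messier bound — together with the bookkeeping of whether $r$ itself lies in $\{i\colon x_i\ge y_i\}$ and whether its (possibly positive) contribution can eat into the negative drift; I would resolve this by proving the pointwise inequality above in a form that is safely $\ge\min\{x_i-y_i,\ \rho_i\text{-something}\}$ and by absorbing all of the $r$-specific and boundary terms into the $O(\log n)[\alpha+x_r-y_r]_+$ slack, where the factor $\log n$ is exactly what the bounded range of the logarithm (thanks to $\delta=1/n$) provides.
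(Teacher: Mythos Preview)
Your overall structure is right, but there is a genuine gap in the heart of the argument, and your case organization creates unnecessary trouble.

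\textbf{The missing idea for the negative term.} Your pointwise inequality $\frac{(b_i-y_i)(\rho_i+\delta S)}{\rho_i+\delta(b_i-y_i)}\ge x_i-y_i$ is \emph{false} in general: take $\rho_i=0$ (so $b_i=x_i$ and $b_i-y_i=x_i-y_i$); then the left side equals $S$ while the right side is $b_i-y_i$, which can exceed $S$. Your fallback bound via $\rho_i+\delta(b_i-y_i)\le(1+\delta)(b_i-y_i)$ gives a ratio $\ge\frac{\rho_i+\delta S}{1+\delta}$, but summing this over $\{i:x_i\ge y_i\}$ can be arbitrarily small and does not yield the cap at $1$. The paper's resolution is a clean two-case split on the \emph{summand} side: either $b_i-y_i\le S$ for all $i$ with $x_i\ge y_i$, in which case $\frac{\rho_i+\delta S}{\rho_i+\delta(b_i-y_i)}\ge 1$ and each term is $\ge\frac{b_i-y_i}{S}\ge\frac{x_i-y_i}{S}$, giving the $L/S$ bound; or there exists a single $j$ with $b_j-y_j>S$, and then that one term rewritten as $\bigl(\frac{\rho_j}{b_j-y_j}+\delta\bigr)^{-1}\bigl(\frac{\rho_j}{S}+\delta\bigr)$ already exceeds $1$. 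This is exactly the dichotomy you flagged as ``the main obstacle'' but did not find.

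\textbf{Organize the cases by whether $r$ is in the sum.} The natural outer split is $x_r<y_r$ versus $x_r\ge y_r$, not $\alpha\gtrless\rho_r$. If $x_r<y_r$, the $r$-summand simply is not present in $P$, and the entire lemma reduces to the negative-drift computation above; there is nothing to absorb into $O(\log n)[\alpha+x_r-y_r]_+$. If $x_r\ge y_r$, then $[\alpha+x_r-y_r]_+=\alpha+x_r-y_r\ge\alpha$, so the crude $O(\log n)\alpha$ bound on the $r$-contribution from increasing $b_r$ converts immediately. (You also swapped the formulas: when $\alpha>\rho_r$ the indicator $\1_{\rho_r>\alpha}$ vanishes, so $\rho_r'=\alpha\frac{\rho_r+\delta S}{2S}>0$; the $-2\alpha$ drop in $\rho_r$ occurs in the other case $\rho_r>\alpha$, where the paper bounds its effect on $P$ by $2\alpha\cdot\frac{b_r-y_r}{\rho_r}\cdot\1_{\rho_r>\alpha}\le 2[\alpha+x_r-y_r]_+$.) With this organization, all the ``bookkeeping of whether $r$ qualifies'' and the worry that the $r$-term ``eats into the negative drift'' disappear.
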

\begin{proof}
We  consider two cases depending on whether $x_r < y_r$ or not.
\paragraph{ If $\boldsymbol{x_r<y_r}$.}  In this case, the term for point $r$ does not contribute to $P$ (as the sum in $P$ is only over points $i$ such that $x_i \geq y_i$). For $i\neq r$, $b_i$ does not change, so the change of $P$ only results from the change of the $\rho_i$. Thus,
\[
P'= -\frac{\alpha}{2}\sum_{i\colon x_i\ge y_i}\frac{b_i-y_i}{\rho_i+\delta(b_i-y_i)}\frac{\rho_i+\delta S}{S}.\]
If $b_i-y_i\le S$ for all $i$ with $x_i\ge y_i$, then the sum can be bounded as
\begin{align*}
\sum_{i\colon x_i\ge y_i}\frac{b_i-y_i}{\rho_i+\delta(b_i-y_i)}\frac{\rho_i+\delta S}{S} &\ge \sum_{i\colon x_i\ge y_i}\frac{b_i-y_i}{S} \ge \sum_{i\colon x_i\ge y_i}\frac{x_i-y_i}{S} =\frac{L}{S}.
\end{align*}
Otherwise, there exists $j$ with $x_j\ge y_j$ and $b_j-y_j>S$. Then
\begin{align*}
\sum_{i\colon x_i\ge y_i}\frac{b_i-y_i}{\rho_i+\delta(b_i-y_i)}\frac{\rho_i+\delta S}{S} &\ge \left(\frac{\rho_j}{b_j-y_j}+\delta\right)^{-1}\left(\frac{\rho_j}{S}+\delta\right) > 1.
\end{align*}
In either case, the sum is at least $\min\left\{\frac{L}{S},1\right\}$, so we get
\begin{align}
P'&\le -\frac{\alpha}{2}\min\left\{\frac{L}{S}, 1\right\},\label{eq:PsimpleBound}
\end{align}
which proves the lemma for the case $x_r<y_r$.

\paragraph{If $\boldsymbol{x_r\ge y_r}$.} Here the change of $P$ has an additional contribution beyond the bound \eqref{eq:PsimpleBound} due to point $r$.
The contribution can be due to decrease of $\rho_r$ at rate $2\alpha\cdot\1_{\rho_r>\alpha}$ and the change of $b_r$.

The decrease of $\rho_r$ at rate $2\cdot\1_{\rho_r>\alpha}$ causes change of $P$ at rate
\begin{align}
2\alpha\cdot\1_{\rho_r > \alpha} \frac{b_r-y_r}{\rho_r+\delta(b_r-y_r)}
&\le 2\alpha\cdot\1_{\rho_r>\alpha} \frac{\rho_r+x_r-y_r}{\rho_r}
\le 2\cdot[\alpha + x_r-y_r]_+,\label{eq:PExtraRho}
\end{align}
where we used that $x_r\ge y_r$.

Now consider the effect of the change of $b_r$. If $x_r+\alpha\le b_r$, then $b_r$ is decreasing and this can only help to decrease $P$, which we can ignore. If $x_r+\alpha>b_r$, then $b_r$ increases at rate $\alpha$. The increase of $b_r$ in the denominator helps to decrease $P$ (which we can ignore), so the additional increase of $P$ due to increasing $b_r$ at rate $\alpha$ is at most at rate
\begin{align}
\alpha\left(\log\frac{(1+\delta)(b_r-y_r)}{\rho_r+\delta(b_r-y_r)}+1\right)\le  \alpha\left(\log\frac{1+\delta}{\delta}+1\right)\le  [\alpha+x_r-y_r]_+ \cdot O(\log n),\label{eq:PExtrab}
\end{align}
where we used again that $x_r\ge y_r$.
We conclude the lemma by adding \eqref{eq:PsimpleBound} and the additional contributions \eqref{eq:PExtraRho} and \eqref{eq:PExtrab}.
\end{proof}
\medskip

In this bound on $P'$ in Lemma \ref{lem:primaryChange}, the positive term can be charged against the offline service cost. If $S=O(L)$, then the negative term can be used to pay for the online cost. However, if the scale $S$ is much larger than $L$, then this term is $P'$ (i.e.~reduction in potential) may be negligibly small. 
Notice that  $S=\sum_i \rho_i = \sum_i (b_i -x_i)$ and $L = \sum_i (x_i-y_i)_+ = \sum_i |x_i-y_i|/2$ as both $x,y$ lie on the simplex, so essentially $S \gg L$ means that our estimate of the scale is far off from the true scale.
In this case, the scale-mismatch potential comes into play, whose change we bound below.

\begin{lemma}[Change of the scale-mismatch potential]\label{lem:scaleChange}
When $y$ is fixed and the online algorithm moves, the change of $Q$ is bounded by
\[
    Q'\le 2\alpha\cdot \1_{y_r\le x_r+\alpha/2}\,-\,\left[ \frac{1}{2} - \frac{L}{S} \right]_+\alpha.\]
    \end{lemma}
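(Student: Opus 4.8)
The plan is to write $Q=\sum_i[g_i]_+$ with $g_i:=\rho_i+2(x_i-y_i)$, differentiate, and bound the positive and negative contributions separately. Since $y$ is fixed, $g_i'=\rho_i'+2x_i'$, and substituting \eqref{eq:xupdate} together with the stated update rule for $\rho_i$, the terms $\pm(\rho_i+\delta S)/(2S)$ partially cancel, leaving
\[
 g_i' \;=\; -\alpha\,\frac{\rho_i+\delta S}{2S}\;+\;2\alpha\cdot\1_{i=r}\,\1_{\rho_r\le\alpha}.
\]
At almost every time every $g_i$ is nonzero (the cost functions change only finitely often and each $g_i$ is piecewise smooth), so $Q'=\sum_{i\colon g_i>0}g_i'$, which I would split as
\[
 Q' \;=\; -\,\frac{\alpha}{2S}\sum_{i\colon g_i>0}(\rho_i+\delta S)\;+\;2\alpha\cdot\1_{g_r>0}\,\1_{\rho_r\le\alpha}.
\]

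For the positive term, observe that if $g_r>0$ and $\rho_r\le\alpha$ then $0<\rho_r+2(x_r-y_r)$ gives $y_r<x_r+\rho_r/2\le x_r+\alpha/2$; hence $\1_{g_r>0}\1_{\rho_r\le\alpha}\le\1_{y_r\le x_r+\alpha/2}$, which yields exactly the claimed positive contribution $2\alpha\cdot\1_{y_r\le x_r+\alpha/2}$.

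For the negative term, the key observation is that $\sum_{i\colon g_i>0}\rho_i\ge S-2L$. Indeed, any index $i$ with $g_i\le0$ satisfies $\rho_i\le 2(y_i-x_i)$, and since $\rho_i\ge0$ this forces $y_i\ge x_i$, so $\rho_i\le 2[y_i-x_i]_+$; summing over such $i$ and using $\sum_i[y_i-x_i]_+=\sum_i[x_i-y_i]_+=L$ (both $x$ and $y$ lie on the simplex) gives $\sum_{i\colon g_i\le0}\rho_i\le 2L$, whence $\sum_{i\colon g_i>0}\rho_i=S-\sum_{i\colon g_i\le0}\rho_i\ge S-2L$. Dropping the nonnegative terms $\delta S$ and using that the expression is trivially at most $0$ when $S<2L$,
\[
 -\,\frac{\alpha}{2S}\sum_{i\colon g_i>0}(\rho_i+\delta S)\;\le\;-\,\frac{\alpha}{2S}\,[S-2L]_+\;=\;-\Big[\tfrac12-\tfrac{L}{S}\Big]_+\alpha .
\]
Adding the two bounds proves the lemma. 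I do not expect a serious obstacle here: the only technicality is the non-differentiability of $[\,\cdot\,]_+$ at $0$, handled by restricting to the full-measure set of times at which all $g_i\ne0$ (as already noted in the text preceding the lemma). Everything else is a direct computation, the one substantive step being the inequality $\sum_{i\colon g_i\le0}\rho_i\le 2L$, which is precisely the mechanism by which $Q$ detects a mismatch between the algorithm's scale estimate $S$ and the true online–offline discrepancy $L$.
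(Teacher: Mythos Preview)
Your proof is correct and follows essentially the same approach as the paper: both compute $g_i'=\rho_i'+2x_i'$, split $Q'$ into the positive contribution at $r$ (bounded via $\rho_r\le\alpha$ and $g_r>0\Rightarrow y_r<x_r+\alpha/2$) and the negative sum over $\{i:g_i>0\}$, and control the latter using $\sum_{i:g_i\le 0}\rho_i\le 2L$. Your exposition is slightly more explicit in isolating $g_i$ and its derivative, but the argument is identical in substance.
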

\begin{proof} By the definition of $Q$  and the updates for changes of $x_i$ and $\rho_i$,  we can write the change in $Q$ compactly as
\begin{align*} 
Q'&= 2\alpha\cdot \1_{\rho_r\le \alpha\text{ and }\rho_r\ge 2(y_r-x_r)}-\frac{\alpha}{2}\sum_{i\colon \rho_i>2(y_i-x_i)}\frac{\rho_i+\delta S}{S}\\
&\le 2\alpha\cdot \1_{y_r\le x_r+\alpha/2}-\frac{\alpha}{2} \bigg[1-\sum_{i\colon \rho_i\le2(y_i-x_i)}\frac{\rho_i}{S}\bigg]_+\\
&\le 2\alpha\cdot \1_{y_r\le x_r+\alpha/2}-\alpha \left[\frac{1}{2}-\frac{L}{S}\right]_+.\qedhere
\end{align*}
\end{proof}

Note if $y_r\le x_r+\alpha/2$, then the offline service cost is at least $\alpha/2$, so the positive part in the change of $Q$ is at most 4 times the offline service cost. The negative part can pay for the online cost if $S\gg L$, which is precisely the case not covered by the primary potential. Concretely, we have
\begin{align*}
    \Theta' = 12P'+6Q' & \le -3\alpha\left(\min\left\{\frac{2L}{S}, 2\right\} + \left[1 - \frac{2L}{S}\right]_+\right) + O(\log n)\cdot\Off'  \\
    & \le -3\alpha + O(\log n)\cdot \Off',
\end{align*}
which yields the desired inequality~\eqref{eq:pot} because the online service cost is $\alpha$ and online movement cost is at most $2\alpha$.
\section{\texorpdfstring{$(1+\epsilon)$}{(1+eps)}-competitive service cost on weighted stars}\label{sec:refined}

We now give an improved algorithm for weighted stars, proving Theorem~\ref{thm:non-refined} and the more refined guarantee of Theorem~\ref{thm:refined} that the service cost is $(1+\epsilon)$-competitive.

For the MTS problem, a systematic way to achieve $1$-competitive service cost is via the online mirror descent framework \cite{BCLLM18}. We will describe how one can view our algorithm from the previous section through a mirror descent lens, and how to modify it so that its service cost competitive ratio is nearly $1$. For the MTS problem, the service cost analysis of mirror descent algorithms proceeds in a mostly black box way. However, due to various technical complications arising from the time-varying regularizer, this will not be possible for our analysis, but we will show how analysis techniques can be adapted to achieve our desired result.

\subsection{Online mirror descent}
The online mirror descent framework has been useful to derive optimally competitive algorithms for problems where the state of an algorithm can be described by a point in a convex body (e.g., set cover, $k$-server, MTS \cite{BCN14,BCLLM18,BCLL19,BuchbinderGMN19,CoesterL19}). That is, the algorithm can be described by a path $x\colon[0,\infty)\to \Delta$ for a convex body $\Delta\subset \R^n$, such that $x(t)$ describes the state of the algorithm at time $t$ (in our case, $\Delta$ is the simplex). In the continuous-time mirror descent framework, the dynamics of an algorithm $x$ is specified by a differential equation of the form
\begin{align}
    \nabla^2\Phi(x(t))\cdot x'(t) = f(t) - \lambda(t)\label{eq:MD}
\end{align}
where $\Phi\colon\Delta\to \R$ is a suitable convex function called the \emph{regularizer}, $\nabla^2\Phi(x(t))$ is its Hessian at $x(t)$, $f\colon [0,\infty)\to \R^n$ is called a \emph{control function}, and $\lambda(t)$ is an element of the normal cone of $\Delta$ at $x(t)$, given by
\[N_\Delta(x(t)):=\{\lambda\in\R^n\colon \langle \lambda, y-x(t)\rangle \le 0, \,\forall y\in\Delta\}.\]
Under suitable conditions (which are all satisfied here), the path $x$ is uniquely defined by \eqref{eq:MD} and absolutely continuous in $t$ \cite{BCLLM18}. The equation \eqref{eq:MD} is easiest to read if we imagine that $\nabla^2\Phi(x(t))$ were the identity matrix. Then \eqref{eq:MD} says that $x$ tries to move in direction $f(t)$, and the normal cone element $\lambda$ ensures that $x(t)$ does not leave the body $\Delta$. The case when $\nabla^2\Phi(x(t))$ is different from the identity matrix corresponds to imposing a different (e.g., non-Euclidean) geometry on $\Delta$.

We show in Appendix~\ref{app:MD} that the dynamics \eqref{eq:xupdate} of $x$ defined in the previous section is precisely equivalent to the mirror descent equation~\eqref{eq:MD} when choosing the \emph{time-varying} regularizer
\begin{align}\Phi_t(x):= \sum_i (b_i(t)-x_i+\delta S(t))\log\left(\frac{b_i(t)-x_i}{S(t)}+\delta\right)\label{eq:regUnweighted}
\end{align}
for $S(t):=\sum_i b_i(t)-1$, and the control function $f(t):=\frac{\alpha_t}{b_{r_t}(t)-x_{r_t}(t)+\delta S(t)}e_{r_t}$, where $e_{r_t}$ denotes the $0$-$1$-vector with a $1$ only in the $r_t$-coordinate, and $\alpha_t$ is the cost charged at $r_t$ at time $t$.

Mirror descent analyses typically employ as a potential function the Bregman divergence, defined by
\begin{align*}
    D_\Phi(y\| x):=\Phi(y)-\Phi(x)+\langle \nabla \Phi(x), x-y\rangle.
\end{align*}
In the classical setting where $\Phi$ is static (rather than time-varying), the change of potential when $x$ moves is given by
\begin{align}
    \frac{d}{dt}D_\Phi(y\| x(t)) 
    &= \langle \nabla^2\Phi(x(t))\cdot x'(t), x(t)-y\rangle\nonumber\\
    &= \langle f(t)-\lambda(t), x(t)-y\rangle\nonumber\\
    &\le \langle f(t), x(t)-y\rangle,\label{eq:BregmanBound}
\end{align}
where the second equation uses \eqref{eq:MD} and the inequality uses that $\lambda(t)\in N_\Delta(x(t))$. Due to \eqref{eq:BregmanBound}, it would seem more natural to choose $f(t)=e_{r_t}$, so that the change in potential allows to charge the online service cost $\alpha_t$ to the offline service cost $[\alpha_t+x_{r_t}-y_{r_t}]_+$:
\begin{align}
    \alpha_t+\frac{d}{dt}D_\Phi(y\| x(t)) \le \alpha_t+x_{r_t}-y_{r_t}.\label{eq:SCcleanBound}
\end{align}
For any \emph{static} regularizer $\Phi$, this implies immediately that the service cost is $1$-competitive against the total offline cost provided that $D_\Phi(y\|x)$ is $1$-Lipschitz in the movement of $y$ (so that any possible increase of the potential when the offline location $y$ changes can be charged to the offline movement cost). Usually, $1$-Lipschitzness in $y$ can be enforced easily by scaling $\Phi$ by a suitable factor. However, for the regularizer \eqref{eq:regUnweighted} we have the problem that $\Phi_t(y)$ may not even be well-defined for the offline location $y$, since the argument of the logarithm could be negative when $b_i\ll y_i$.

We not only want the service cost to be close to $1$-competitive, but also that the movement cost is still $O(\log n)$-competitive. This requires a suitable choice of the regularizer $\Phi_t$, which in our case needs to be changing over time. An immediate complication due to this is that any change of $\Phi_t$ can cause an additional change of the potential $D_{\Phi_t}(y\|x)$.

The following list summarizes the main obstacles and desiderata from this discussion.
\begin{enumerate}
    \item $\Phi_t(y)$ should be well-defined for the offline location $y$.
    \item $D_{\Phi_t}(y\|x)$ should be close to $1$-Lipschitz in the movement of $y$.
    \item $\Phi_t$ should be time-varying in a way that facilitates $O(\log n)$-competitive movement cost.
    \item Changes to the potential $D_{\Phi_t}(y\|x)$ due to the time-varying nature of $\Phi_t$ need to be controlled.
\end{enumerate}

\subsection{Adapting mirror descent to time-varying regularization}
To ensure that $\Phi_t(y)$ is well-defined for the offline location $y$, we simply replace the term $b_i(t)-x_i$ inside the logarithm by its positive part $[b_i(t)-x_i]_+$. For the online algorithm, this does not actually change anything as we will ensure that $b_i(t)>x_i(t)$. In particular, $\Phi_t$ will always be strongly convex in a neighborhood of the online location $x(t)$.

Overall, we will use the regularizer
\begin{align}\Phi_t(x):= \frac{1}{\eta}\sum_i w_i(b_i(t)-x_i+\delta S(t))\log\left(\frac{[b_i(t)-x_i]_+}{S(t)}+\delta\right)\label{eq:regWighted}
\end{align}
where $\eta=O\left(\frac{1}{\epsilon}+\log n\right)$ is a scaling factor (which will be proportional to our overall competitive ratio), $\delta>0$ will be chosen suitably small, and $w_i$ are the weights of the weighted star.

From the definition of Bregman divergence, one can calculate that
\begin{align}
    D_{\Phi_t}(y\|x)&=\frac{1}{\eta}\sum_i w_i\left(\left(b_i(t)-y_i+\delta S(t)\right)\log\frac{[b_i(t)-y_i]_++\delta S(t)}{b_i(t)-x_i+\delta S(t)} - x_i + y_i\right).\label{eq:BregmanBad}
\end{align}

Unfortunately, neither the second nor the forth desired property above are satisfied by $D_{\Phi_t}(y\|x)$: When $S\to 0$, the logarithmic term can be unbounded, and therefore $D_{\Phi_t}(y\|x)$ fails to be Lipschitz in the movement of $y$. Moreover, $D_{\Phi_t}$ is not well-behaved when the regularizer changes: While it is possible to control the effect of changing $b_r(t)$ in the summand corresponding to the requested point $r$, the resulting change of $S(t)$ affects all summands and it is unclear what the resulting change of $D_{\Phi_t}(y\|x)$ could be charged against (especially when the weights $w_i$ are very different from each other).

We will address both of these issues by using a modified version of the Bregman divergence as a potential, in which the $S(t)$ terms in the $i$-summand are replaced by $b_i(t)-x_i\land y_i$, where $x_i\land y_i$ denotes the minimum of $x_i$ and $y_i$. With this change, the argument of the logarithm is bounded in $[\delta/(1+\delta), (1+\delta)/\delta]$ (and the potential becomes $1$-Lipschitz in $y$ for sufficiently large $\eta$). Moreover, the change of $b_r(t)$ will then only affect the $r$-summand, and we will be able to account for this. Some additional small tweaks to the potential will be necessary, which we ignore for now.

So, as our potential will be slightly different from the Bregman divergence, the suggested control function $f(t)=e_{r_t}$ will not quite be enough any more to conclude the analogous statement of \eqref{eq:SCcleanBound}. Instead, we will use \[f(t)=\frac{b_{r_t}(t)-x_{r_t}(t)}{b_{r_t}(t)-x_{r_t}(t)+\delta S(t)} e_{r_t},\] which is similar when $\delta S(t)$ is relatively small compared to $b_{r_t}(t)-x_{r_t}(t)$.

As we derive formally in Appendix~\ref{app:MD}, this choice of $\Phi_t$ and $f(t)$ induces the following dynamics of the online location $x$ when the cost function comes at $r=r_t$:
\begin{align}
x_i'(t)&=\eta \frac{b_r(t)-x_r(t)}{w_r}\left(\1_{i=r}-\frac{b_i(t)-x_i(t)+\delta S(t)}{\gamma(t) w_i S(t)}\right)\label{eq:xUpdateNew}
\end{align}
where \[\gamma(t):=\sum_{i}\frac{x_i(t)-b_i(t)+\delta S(t)}{w_i S(t)}.\] In the next subsection, we will describe how to choose the dynamics of $b_i$ so that the third desired property, i.e., $O(\eta)$-competitive movement cost, can be satisfied.

\subsection{Moving the baseline}
Let us assume for a moment that our service cost will be $(1+\epsilon)$-competitive as desired. Then a natural way to achieve $O(\eta)$-competitive movement cost is to ensure that the online movement cost is at most $O(\eta)$ times the online service cost. For MTS, this can be achieved by choosing the movement speed roughly proportionally to $O(\eta)$ times the current service cost~\cite{BCLL19}. Note that the algorithm~\eqref{eq:xupdate} from the previous section had a similar property: Its movement speed was proportional to the service cost $\alpha$. However, for our new update rule \eqref{eq:xUpdateNew}, the movement speed of $x$ is actually independent of $\alpha$, so it is less obvious how to achieve this.

We will still be able to bound the online movement cost against the online service cost, indirectly via the movement of the baseline: As before, the baseline $b_i$ will be changing only when $i=r$ is the requested point. We will choose the update of $b_r$ such that
\begin{alignat}{2}
    &w_rx_r'(t)&&\le O(\eta) w_r|b_r'(t)|\label{eq:brDesired1}\\
    &w_r[b_r'(t)]_+&&\le \alpha_t.\label{eq:brDesired2}
\end{alignat}
for all times $t$. By the first inequality, the movement cost of $x$ is at most $O(\eta)$ times the weighted movement of $b$. By the second inequality (and the fact that the total positive movement of $b_r$ will be equal to the total negative movement of $b_r$ -- up to an additive constant), the weighted movement of $b_r$ is at most the service cost. Combined, this implies that the online movement cost is at most $O(\eta)$ times the online service cost (and thus, if the online service cost is $O(1)$-competitive, the algorithm is overall $O(\eta)$-competitive).

It is straightforward to verify that inequalities \eqref{eq:brDesired1} and \eqref{eq:brDesired2} are achieved by the following update rule of $b$:
\begin{align}
b_i'(t)&=\begin{cases}
\alpha_t/w_r&i=r\text{ and }b_r(t)-x_r(t)\le 2\alpha_t\\
-(b_r(t)-x_r(t))/(2w_r)\quad&i=r\text{ and }b_r(t)-x_r(t) > 2\alpha_t\\
0&i\ne r.
\end{cases}\label{eq:bUpdate}
\end{align}
Qualitatively, this update of $b$ resembles similar ideas that we already motivated in Section~\ref{sec:firstAlgo} (for small $\alpha$ we decrease $b_r$, for large $\alpha$ we increase $b_r$), but quantitatively the new update is more delicate in order to facilitate the rigorous analysis that we will provide in the next subsections.

Note that as before, it is guaranteed that $b_i > x_i$: Whenever $b_r-x_r$ approaches $0$, the rate of movement of $x$ tends to $0$ whereas $b_r$ increases at rate $\frac{\alpha}{w_r}$; and for $i\ne r$, $b_i-x_i$ can only increase.
\subsection{Algorithm}
Let $\delta:=1/\max\{n^2, e^{3/\epsilon}\}$ and $\eta:=(1+\delta)\log\frac{1+\delta}{\delta}=O\left(\frac{1}{\epsilon}+\log n\right)$. For every leaf $i$, we maintain a value $b_i(t)\in(x_i(t),2]$, initialized arbitrarily in this interval. Let $S(t):=\sum_i b_i(t)-1 = \sum_i (b_i(t)-x_i(t))>0$ and $\gamma(t):=\sum_{i}\frac{b_i(t)-x_i(t)+\delta S(t)}{w_i S(t)}$. Hereafter, we drop the dependence on $t$ from the notation. When cost $\alpha>0$ is charged at leaf $r$, then we update $x$ and $b$ as already stated in~\eqref{eq:xUpdateNew} and~\eqref{eq:bUpdate}, i.e.:
\begin{align*}
x_i'&=\eta \frac{b_r-x_r}{w_r}\left(\1_{i=r}-\frac{b_i-x_i+\delta S}{\gamma w_i S}\right)\\
b_i'&=\begin{cases}
\alpha/w_r&i=r\text{ and }b_r-x_r\le 2\alpha\\
-(b_r-x_r)/(2w_r)\quad&i=r\text{ and }b_r-x_r > 2\alpha\\
0&i\ne r.
\end{cases}
\end{align*}

\paragraph{Well-definedness.} First observe that $\sum_i x_i'=0$ by choice of $\gamma$, so $\sum_i x_i=1$ is maintained. As before, to avoid complicating the notation we allow the algorithm to decrease $x_i$ to negative values, which is without loss of generality as per Appendix~\ref{app:simplified}. 

We already argued that $b_i>x_i$ is maintained, which also shows that $S>0$. To see that $b_i\le 2$, note that $b_r$ can increase only if $b_r\le x_r+2\alpha\le 2-x_r\le 2$, where we used that the point $\alpha+x_r$ where cost becomes $0$ lies in $[0,1]$.

\subsection{Service cost analysis}
We use the notation $a\land b:=\min\{a,b\}$.
Our analysis employs a potential function defined as
\begin{align}
D&:= \frac{\beta}{\eta}\sum_i w_i\left[\left(b_i-y_i + \delta(b_i-x_i\land y_i)\right) \log \frac{[b_i-y_i]_+ + \delta(b_i-x_i\land y_i)}{b_i-x_i + \delta(b_i-x_i\land y_i)} - (1-n\delta)x_i + y_i + b_i\right],\label{eq:pseudoBregman}
\end{align}
where $\beta:=1+\frac{2}{\eta}$. So this is similar to the Bregman divergence \eqref{eq:BregmanBad} with $S$ replaced by $b_i-x_i\land y_i$. The factors $\beta$ and $(1-n\delta)$ are needed for technical reasons to handle complications resulting from the fact that this is not actually the Bregman divergence (note that both these factors tend to $1$ as $n\to\infty$) and to allow for accounting of changes of the potential due to changes of $b$. Note also the $b_i$ term at the end, which would not exist in the usual Bregman divergence. Having this term in the potential will allow us in some cases to charge small increases of other parts of the potential to a decrease of $b_r$.

We will sometimes use that $\epsilon\ge 3/\eta$, which follows since $\eta\ge \log(1/\delta)\ge 3/\epsilon$. Some steps of our proof assume that $n$ is sufficiently large (compared to some constant). This is without loss of generality, as we could simply add a constant number of additional leaves to the weighted star.

By the following lemma, the potential is $(1+\epsilon)$-Lipschitz in the offline movement of $y$.
\begin{lemma}\label{lem:offLipschitz}
	For all $i\in M$, $D$ is $(1+\epsilon)w_i$-Lipschitz in $y_i$.
\end{lemma}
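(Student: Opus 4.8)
The plan is to fix a coordinate $i$ and differentiate the $i$-th summand of $D$ with respect to $y_i$, then bound the resulting expression by $(1+\epsilon)w_i/(\beta/\eta)\cdot(\beta/\eta) = (1+\epsilon)w_i$ uniformly. Write $u:=b_i-x_i>0$ (a quantity that does not depend on $y_i$) and consider the two regimes $y_i\le x_i$ and $y_i> x_i$ separately, since the expression $x_i\land y_i$ and the positive part $[b_i-y_i]_+$ both behave differently across $y_i=x_i$ (and, in the second regime, potentially across $y_i = b_i$). The key observation that makes everything bounded is the one already flagged in the text: with $S$ replaced by $b_i-x_i\land y_i$, the argument of the logarithm always lies in the interval $[\delta/(1+\delta),(1+\delta)/\delta]$, so $\log(\cdot)$ is bounded in absolute value by $\log\frac{1+\delta}{\delta}$, and by the choice $\eta=(1+\delta)\log\frac{1+\delta}{\delta}$ we have $\frac{\beta}{\eta}\log\frac{1+\delta}{\delta}\le \frac{\beta}{1+\delta}\le 1$ up to lower-order terms. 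First I would record these elementary bounds and the derivative of the argument of the logarithm with respect to $y_i$.

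Concretely, in the regime $y_i> x_i$ we have $x_i\land y_i = x_i$, so the denominator of the logarithm and the additive $\delta(b_i-x_i\land y_i)=\delta u$ terms are all independent of $y_i$; only the prefactor $b_i-y_i+\delta u$, the numerator $[b_i-y_i]_++\delta u$, and the linear tail $-(1-n\delta)x_i+y_i+b_i$ depend on $y_i$. Differentiating, the contribution of the linear tail is exactly $w_i\cdot\frac{\beta}{\eta}$; the derivative of the logarithmic product term is $-\frac{\beta}{\eta}w_i\big(\log\frac{[b_i-y_i]_++\delta u}{u+\delta u} + \frac{b_i-y_i+\delta u}{[b_i-y_i]_++\delta u}\cdot\frac{d}{dy_i}[b_i-y_i]_+\big)$. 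For $y_i\le b_i$ the last factor is $-1$ and the whole ratio $\frac{b_i-y_i+\delta u}{[b_i-y_i]_++\delta u}$ equals $1$, so the "$1$" term cancels, leaving only the logarithm, which is bounded by $\log\frac{1+\delta}{\delta}$ in absolute value; for $y_i> b_i$ the positive part is constant so $\frac{d}{dy_i}[b_i-y_i]_+=0$ and one is left with the logarithm again (now the ratio $\frac{b_i-y_i+\delta u}{\delta u}$ can be negative, but that term is multiplied by $0$). In the regime $y_i\le x_i$ we have $x_i\land y_i=y_i$, so now $\delta(b_i-y_i)$ appears in every term; I would again differentiate term by term, noting that the extra $y_i$-dependence through the $\delta(b_i-y_i)$ factors only multiplies the bounded logarithm by $\delta$ (hence contributes $O(\delta)$) and shifts the "$+1$"-type terms by $O(\delta)$. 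Collecting, in all cases the $y_i$-derivative of the $i$-summand of $D$ is at most $\frac{\beta}{\eta}w_i\big(\log\frac{1+\delta}{\delta}+O(1)+O(n\delta)\big)\le \frac{\beta}{\eta}w_i\big(\frac{\eta}{1+\delta}+O(1)\big)$, which is $(1+\epsilon)w_i$ using $\beta=1+2/\eta$, $\epsilon\ge 3/\eta$, $\delta\le 1/n^2$ so $n\delta\le 1/n$, and $n$ large.

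I expect the main obstacle to be the bookkeeping of the lower-order terms: one has to be careful that the "$+1$" coming from differentiating $z\log z$, the factor $\beta=1+2/\eta$, the $(1-n\delta)$ correction in the linear tail, and the $O(\delta)$ and $O(n\delta)$ slack from the $x_i\land y_i$ substitution all fit inside the budget $(1+\epsilon)w_i$ rather than, say, $(1+\epsilon)w_i + O(1)$. This is exactly why the constants were chosen as $\delta=1/\max\{n^2,e^{3/\epsilon}\}$, $\eta=(1+\delta)\log\frac{1+\delta}{\delta}$ and $\beta=1+2/\eta$: the dominant term is $\frac{\beta}{\eta}w_i\log\frac{1+\delta}{\delta} = \frac{\beta}{1+\delta}w_i$, and $\frac{\beta}{1+\delta}=\frac{1+2/\eta}{1+\delta}\le 1+2/\eta \le 1+2\epsilon/3$, leaving an $\epsilon/3$ margin to absorb the remaining additive $O(1/\eta)+O(n\delta) \le O(\epsilon)$ terms. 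The only genuinely delicate point is verifying the sign in the case $y_i\le x_i$ — there the prefactor $b_i-y_i+\delta(b_i-y_i)=(1+\delta)(b_i-y_i)$ grows as $y_i$ decreases, so I should double check that its product with the (bounded) logarithm, differentiated, still contributes only $O(\log\frac{1+\delta}{\delta})$ and not something proportional to $b_i-y_i$; this works because the derivative of the prefactor is the constant $-(1+\delta)$, and it is the derivative of the logarithm (which is $O(1/(b_i-y_i))$, cancelling the growth) times the prefactor that could have been large but is in fact $O(1)$ after the cancellation. Once these cases are checked, integrating the derivative bound over any offline move of $y_i$ gives the claimed Lipschitz constant.
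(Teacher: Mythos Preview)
Your proposal is correct and follows essentially the same approach as the paper: differentiate the $i$th summand of $D$ with respect to $y_i$, split into the cases $y_i<x_i$, $x_i<y_i<b_i$, $y_i>b_i$, and bound the derivative in each case using that the argument of the logarithm always lies in $[\delta/(1+\delta),(1+\delta)/\delta]$ together with the choice $\eta=(1+\delta)\log\frac{1+\delta}{\delta}$. The paper carries out exactly this case analysis, computing $\partial_{y_i}D$ explicitly in each regime and showing it lies in an interval contained in $[-(1+\epsilon)w_i,(1+\epsilon)w_i]$; your sketch matches this, with the only difference being that you track the lower-order terms as $O(1)$ and $O(n\delta)$ rather than writing out the exact expressions (the $O(n\delta)$ term, incidentally, does not actually arise since the $(1-n\delta)x_i$ summand is independent of $y_i$).
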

\begin{proof}
It suffices to show that if $y_i\notin\{b_i,x_i\}$ (so that the derivative $\partial_{y_i}D$ exists), then
\begin{align*}
    \partial_{y_i}D \in\left[-\frac{\beta w_i}{\eta}\left((1+\delta)\log\frac{1+\delta}{\delta}-1\right), \frac{\beta w_i}{\eta}\left(\log\frac{1+\delta}{\delta}+1\right)\right].
\end{align*}
The lemma then follows since this interval is contained in $[-\beta w_i,\beta w_i]\subseteq [-(1+\epsilon) w_i,(1+\epsilon) w_i]$.

If $y_i>b_i$, then
\[
\partial_{y_i}D = \frac{\beta w_i}{\eta}\left(\log\frac{1+\delta}{\delta}+1\right).
\]
If $x_i<y_i<b_i$, then
\[
\partial_{y_i}D = \frac{\beta w_i}{\eta}\log\frac{(1+\delta)(b_i-x_i)}{b_i-y_i+\delta(b_i-x_i)}\in\left[0,\frac{\beta w_i}{\eta}\log\frac{1+\delta}{\delta}\right].
\]
If $y_i<x_i$, then
\begin{align*}
\partial_{y_i}D &= \frac{\beta w_i}{\eta}\left((1+\delta)\log\frac{b_i-x_i+\delta(b_i-y_i)}{(1+\delta)(b_i-y_i)} -\delta +\delta\frac{(1+\delta)(b_i-y_i)}{b_i-x_i+\delta(b_i-y_i)}\right)\\
&\in\left[\frac{\beta w_i}{\eta}\left((1+\delta)\log\frac{\delta}{1+\delta}+1\right), -\frac{\beta w_i}{\eta}\delta\right].\qedhere
\end{align*}
\end{proof}

Our goal now is to show that whenever the online algorithm is moving and the derivative $D'$ of $D$ with respect to time is well-defined,
\begin{align}
\alpha+D' \le (1+\epsilon)[\alpha+x_r-y_r]_+.\label{eq:wstarGoal}
\end{align}
assuming the offline location $y$ remains unchanged at this time.

Note that this implies $(1+\epsilon)$-competitive service cost since $\alpha$ is the instantaneous online service cost, $[\alpha+x_r-y_r]_+$ is the instantaneous offline service cost, and movement of $y$ increases $D$ by at most $1+\epsilon$ times the offline movement cost by Lemma~\ref{lem:offLipschitz}.

The following lemma bounds the change of $D$ resulting from the change of $x$.
\begin{lemma}
	At almost all times when the online algorithm is moving,
	\begin{align}
	\sum_i x_i' \cdot \partial_{x_i}D&\le -\beta (b_r-x_r)\left(1-2n\delta\right) + \beta(1+\delta)[b_r-y_r]_+.\label{eq:PhiChangeX}
	\end{align}
\end{lemma}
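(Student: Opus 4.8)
The plan is to reduce \eqref{eq:PhiChangeX} to a one-variable estimate on the quantities $g_i:=\frac{\eta}{\beta w_i}\partial_{x_i}D$, then exploit the simplex constraint $\sum_i(x_i-y_i)=0$.

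\textbf{Step 1 (the derivatives $\partial_{x_i}D$).} I would compute $\partial_{x_i}D$ by splitting into the three cases dictated by the non-smoothness of \eqref{eq:pseudoBregman}: $x_i\ge y_i$ (so $x_i\land y_i=y_i$ and $[b_i-y_i]_+=b_i-y_i$, using $b_i>x_i$), $x_i<y_i\le b_i$, and $y_i>b_i$. Writing $\sigma_i:=b_i-x_i>0$, I expect the outcome after simplification to be
\[
g_i\sigma_i=\frac{\sigma_i}{\sigma_i+\delta(b_i-y_i)}\,(x_i-y_i)+n\delta\sigma_i\qquad(x_i\ge y_i),
\]
$g_i\sigma_i=(x_i-y_i)+n\delta\sigma_i-\delta\sigma_i\log\frac{b_i-y_i+\delta\sigma_i}{(1+\delta)\sigma_i}$ when $x_i<y_i\le b_i$, and $g_i=\frac{\eta\delta}{1+\delta}-(1-n\delta)$ (so $g_i\sigma_i=-(1-\frac{\eta\delta}{1+\delta}-n\delta)\sigma_i$) when $y_i>b_i$. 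Two facts will matter: (i) $g_i\ge-(1-n\delta)$ in every case, equivalently $g_i\sigma_i\ge-(1-n\delta)\sigma_i$; and (ii) a \emph{near-identity} $g_i\sigma_i=(x_i-y_i)+\mathrm{err}_i$, valid when $y_i\le b_i$, where $\mathrm{err}_i\in\bigl[-[x_i-y_i]_+,\,(n\delta+\tfrac{\eta\delta}{1+\delta})\sigma_i\bigr]$, while if $y_i>b_i$ one instead has $g_i\sigma_i>x_i-y_i$.

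\textbf{Step 2 (substitute the update).} Plugging \eqref{eq:xUpdateNew} in and using $\sum_i\frac{\sigma_i+\delta S}{\gamma w_iS}=1$,
\[
\sum_i x_i'\,\partial_{x_i}D=\beta(b_r-x_r)\Bigl(g_r-\frac{1}{w_r\gamma S}\sum_i g_i(\sigma_i+\delta S)\Bigr),
\]
so it suffices to bound the bracket by $(1+\delta)\frac{[b_r-y_r]_+}{b_r-x_r}-(1-2n\delta)$. For the $g_r$ term: if $x_r\ge y_r$ the case formula gives $g_r\le(1+\delta)\frac{b_r-y_r}{b_r-x_r}-(1-n\delta)$ directly (drop $\delta(b_r-y_r)\ge0$ from a denominator); if $x_r<y_r$ then $[b_r-y_r]_+\le b_r-x_r$ and $g_r$ is either negative (case $y_r>b_r$) or exceeds $(1+\delta)\frac{b_r-y_r}{b_r-x_r}-(1-n\delta)$ by only $O(\eta\delta)$. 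For the averaged term the goal is to show it is non-negative up to $O((n+\eta)\delta)\sigma_r$ slack: split off $\frac{\delta S}{w_r\gamma S}\sum_ig_i$ and apply the near-identity to get $\sum_ig_i\sigma_i=\sum_i(x_i-y_i)+\sum_i\mathrm{err}_i=\sum_i\mathrm{err}_i$. Only coordinates with $x_i>y_i$ contribute negative error, there bounded by $-(x_i-y_i)$; these must be absorbed using that $g_r\sigma_r$ is itself one of the summands, so its true coefficient in the bracket is $1-\frac{\sigma_r+\delta S}{w_r\gamma S}$, which collapses precisely as the remaining $\sigma_i$ — hence the potential negative errors — become small relative to $S$, together with $\sigma_r\le S$. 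The residual error is $O((n+\eta)\delta)\sigma_r$, which the $2n\delta$ and $\delta$ terms in the target cover since $\delta\le1/\max\{n^2,e^{3/\epsilon}\}$.

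\textbf{Main obstacle.} The delicate point is Step 2's bound on the averaged term: naive per-coordinate estimates only give $\sum_ig_i(\sigma_i+\delta S)\ge-(1+n\delta)S$, hence $\frac{1}{w_r\gamma S}\sum_ig_i(\sigma_i+\delta S)\ge-1$, which is off by a whole additive $\beta(b_r-x_r)$ from what is needed, so the cancellation must be harvested globally rather than term by term. An alternative I would try first is to route through the mirror-descent identity: \eqref{eq:xUpdateNew} is exactly $\nabla^2\Phi_t(x)x'=f-\lambda$ for the regularizer \eqref{eq:regWighted} (Appendix~\ref{app:MD}), so $\langle\nabla^2\Phi_t(x)x',x-y\rangle=\langle f-\lambda,x-y\rangle\le\langle f,x-y\rangle=\frac{b_r-x_r}{b_r-x_r+\delta S}(x_r-y_r)$, which already essentially matches the target; one would then only need to control the discrepancy between $D$ and $\beta$ times the true Bregman divergence $D_{\Phi_t}(y\|x)$, coming from the replacement $S\to b_i-x_i\land y_i$ inside the logarithms, the factor $1-n\delta$ in place of $1$, and the extra $\sum_iw_ib_i$ term. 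Either way, the crux is controlling the effect of the $S\to b_i-x_i\land y_i$ substitution, whose $x_i$-derivative behaves differently on $\{x_i<y_i\}$ than on $\{x_i\ge y_i\}$.
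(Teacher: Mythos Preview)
Your Step~1 computations of $\partial_{x_i}D$ are correct and match the paper. The gap is in Step~2, precisely at the point you flag as the main obstacle: you never actually establish the needed lower bound on the averaged term $\frac{1}{w_r\gamma S}\sum_i g_i(\sigma_i+\delta S)$. Your near-identity gives $\sum_i g_i\sigma_i=\sum_i\mathrm{err}_i$, but the only lower bound you state for the negative errors is $\mathrm{err}_i\ge -[x_i-y_i]_+$, which yields $\sum_i\mathrm{err}_i\ge -L$ with $L=\sum_i[x_i-y_i]_+$. Since $L/S$ can be $\Theta(1)$, this is off by an additive $\Theta(1)$ from what is required, exactly the gap you note. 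The proposed ``collapse'' via the coefficient $1-\frac{\sigma_r+\delta S}{w_r\gamma S}$ does not close it: this coefficient becomes small only when $\sigma_r\approx S$, but the negative errors live at coordinates $i$ with $x_i>y_i$, which need not be $r$, and their magnitude $x_i-y_i$ is not controlled by $\sigma_i$ being small. The mirror-descent alternative likewise defers the whole difficulty to ``controlling the discrepancy'' between $D$ and the true Bregman divergence, which you do not do.

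The paper's argument takes a different route. Writing $h_i:=\frac{[b_i-y_i]_++\delta(b_i-x_i\land y_i)}{\sigma_i+\delta(b_i-x_i\land y_i)}$, the key step is the global inequality $\sum_i\frac{\sigma_i+\delta S}{\gamma S}(1-n\delta-h_i)\le 0$. The paper proves this by first \emph{reducing to the case $y_i\le b_i$ for all $i$}: shifting $y$-mass from a coordinate with $y_i>b_i$ to one with $y_j<x_j$ only increases the left-hand side. After this reduction one has the crucial bound $b_i-x_i\land y_i\le S$ (since each $b_i-x_i$ and $b_i-y_i$ is at most $S$), which lets you replace the denominator $\sigma_i+\delta(b_i-x_i\land y_i)$ by $\sigma_i+\delta S$; combined with $b_i-x_i\land y_i\ge b_i-y_i$ in the numerator this gives $(\sigma_i+\delta S)h_i\ge(1+\delta)(b_i-y_i)$, and summing using $\sum_i(b_i-y_i)=S$ finishes. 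This monotone reduction in $y$ together with the bound $b_i-x_i\land y_i\le S$ is the idea your proposal is missing; once you have it, the $r$-term is peeled off exactly as you intended and the $\delta\ell_i$ contribution is absorbed using $\ell_i\in[-\eta,0]$ and $\delta\eta\le n\delta$.
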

\begin{proof}
	We claim that if $x_i\ne y_i$, then
	\begin{align*}
	\partial_{x_i}D = \frac{\beta w_i}{\eta}\left(- (1-n\delta)  + \frac{[b_i-y_i]_+ + \delta(b_i - x_i\land y_i)}{b_i-x_i + \delta(b_i - x_i\land y_i)} - \delta\ell_i \right),
	\end{align*}
	where
	\begin{align*}
	\ell_i:= \1_{x_i<y_i}\left(\log \frac{[b_i -  y_i]_+ + \delta(b_i - x_i\land y_i)}{b_i- x_i + \delta(b_i - x_i\land y_i)} + 1 - \frac{[b_i -  y_i]_+ + \delta(b_i - x_i\land y_i)}{b_i-x_i + \delta(b_i - x_i\land y_i)}\right)\in[-\eta,0].
	\end{align*}
	One can see this by considering separately the cases $y_i<b_i$ and $y_i\ge b_i$; note that in the latter case, the argument of the logarithm in $D$ is simply $\delta/(1+\delta)$ and thus a constant function of $x_i$.
	
	Therefore,
	\[
	\sum_i x_i' \cdot \partial_{x_i}D= \beta\frac{b_r-x_r}{w_r}\sum_i \left(\frac{b_i-x_i+\delta S}{\gamma S} - \1_{i=r}w_r\right)\left( 1-n\delta- \frac{[b_i- y_i]_+ + \delta(b_i - x_i\land y_i)}{b_i-x_i + \delta(b_i - x_i\land y_i)} + \delta\ell_i\right).
	\]
	We claim that
	\begin{align}
	\sum_i \frac{b_i-x_i+\delta S}{\gamma S}\cdot \left(1-n\delta-\frac{[b_i- y_i]_+ + \delta(b_i - x_i\land y_i)}{b_i-x_i + \delta(b_i - x_i\land y_i)}\right)\le 0.\label{eq:subSum}
	\end{align}
	To see this, we first argue that the left-hand side of \eqref{eq:subSum} is maximized by some $y$ with $y_i\le b_i$ for all $i$. Suppose there is some $i$ with $y_i>b_i$. If $y_i$ is decreased by a small amount while $y_j$ for some $j$ with $y_j<x_j$ is increased by the same amount (so that the new $y$ is still in the simplex), then the left-hand side of \eqref{eq:subSum} gets larger. So indeed, the left-hand side of \eqref{eq:subSum} is maximized when $y_i\le b_i$ for all $i$. In this case, we also have $b_i - x_i\land y_i = \max\{b_i-x_i, b_i-y_i\}\le \max\{\sum_i (b_i-x_i), \sum_i (b_i- y_i)\}=S$. Thus,
	\begin{align*}
	\sum_i &\frac{b_i-x_i+\delta S}{\gamma S}\cdot \left(1-n\delta-\frac{[b_i- y_i]_+ + \delta(b_i - x_i\land y_i)}{b_i-x_i + \delta(b_i - x_i\land y_i)}\right) \\
	&\le \max_{y\colon y_i\ge b_i\forall i}\sum_i \frac{b_i-x_i+\delta S}{\gamma S}\cdot \left(1-n\delta- \frac{b_i-  y_i + \delta(b_i - x_i\land y_i)}{b_i-x_i + \delta S}\right)\\
	&\le \frac{S+n\delta S}{\gamma S}(1-n\delta) - \min_{y\colon y_i\ge b_i\forall i}\sum_i\frac{(b_i-  y_i)(1+\delta)}{\gamma S}\\
	&= \frac{1-(n\delta)^2}{\gamma} - \frac{1+\delta}{\gamma} \le 0,
	\end{align*}
	proving \eqref{eq:subSum}. Hence, using $\ell_i\in [-\eta,0]$,
	\begin{align*}
	\sum_i x_i' \cdot \partial_{x_i}D&\le - \beta(b_r-x_r)\left( 1-n\delta- \frac{[b_r-y_r]_+ + \delta(b_r-x_r\land y_r)}{b_r-x_r + \delta(b_r - x_r\land y_r)} - \delta\eta\right)\\
	&\le -\beta (b_r-x_r)\left(1-2n\delta\right) + \beta(1+\delta)[b_r-y_r]_+\qedhere
	\end{align*}
\end{proof}
The other quantity causing a change of $D$ is $b_r$. We have
\begin{align}
&\partial_{b_r}D=\nonumber \\
&\,\,\frac{\beta w_r}{\eta}\left[(1+\delta)\left(\log \frac{[b_r-y_r]_+ + \delta(b_r - x_r\land y_r)}{b_r-x_r + \delta(b_r - x_r\land y_r)} +\1_{y_r<b_r}\left(1 -\frac{b_r-y_r + \delta(b_r - x_r\land y_r)}{b_r-x_r + \delta(b_r - x_r\land y_r)}\right)\right) + 1\right].\label{eq:PhiChangeCr}
\end{align}
\paragraph{If $\boldsymbol{b_r-x_r\le 2\alpha}$.} Let us first establish \eqref{eq:wstarGoal} for the case $b_r-x_r\le 2\alpha$. Using the update rule $b_r'=\frac{\alpha}{w_r}$ for this case and the fact that $\log z\le z-1$ for all $z$, we get\footnote{Note that the second inequality below is very brittle as it requires that $\eta$ is no larger than what we defined it to be (whereas other inequalities also impose almost the same lower bound on $\eta$). However, even for larger $\eta$ one can prove that the algorithm still has $(1+\epsilon)$-competitive service cost by introducing a ``bad baseline'' potential $\sum_i w_i [y_i-b_i]_+$.}
\[
b_r'\cdot\partial_{b_r}D \le \frac{\alpha\beta}{\eta}\left[1_{y_r\ge b_r}(1+\delta)\log (\delta/(1+\delta)) + 1\right]
\le  \alpha\beta\left[-1_{y_r\ge b_r} + 1/\eta\right].
\]
Thus,
\begin{align*}
\alpha+D' &= \alpha+\sum_i x_i'\partial_{x_i}D + b_r'\partial_{b_r}D\\
&\le \alpha -\beta (b_r-x_r)\left(1-2n\delta\right) + \beta(1+\delta)[b_r-y_r]_+ + \alpha\beta\left[-1_{y_r\ge b_r} + 1/\eta\right].
\end{align*}
If $y_r\ge b_r$, then we obtain \eqref{eq:wstarGoal} via
\begin{align*}
\alpha+D' &\le \alpha - \alpha\beta\left[1- 1/\eta\right]\le 0\le (1+\epsilon)[\alpha+x_r-y_r]_+.
\end{align*}
If $y_r<b_r$, then we obtain \eqref{eq:wstarGoal} via
\begin{align*}
\alpha+D' &\le \alpha -\beta (b_r-x_r)\left(1-2n\delta\right) + \beta(1+\delta)(b_r-y_r) + \alpha\beta/\eta\\
&= \alpha +\beta (b_r-x_r)(2n+1)\delta + \beta(1+\delta)(x_r-y_r) + \alpha\beta/\eta\\
&\le \alpha \left(1 +2\beta (2n+1)\delta+ \beta/\eta\right) + \beta(1+\delta)(x_r-y_r) \\
&\le \alpha \beta + \beta(1+\delta)(x_r-y_r) \\
&\le (1+\epsilon)[\alpha + x_r-y_r]_+
\end{align*}
where the second inequality uses $b_r-x_r\le 2\alpha$, the third inequality uses $\beta=1+\frac{2}{\eta}$ and holds for $n$ sufficiently large, and the last inequality holds since $\epsilon\ge \beta(1+\delta)$.

\paragraph{If $\boldsymbol{b_r-x_r> 2\alpha}$.}For the case $b_r-x_r> 2\alpha$, the update rule $b_r'=-\frac{b_r-x_r}{2w_r}$ combined with \eqref{eq:PhiChangeCr} yields
\begin{align}
&b_r'\cdot\partial_{b_r}D =\nonumber\\
&\,\,-\frac{\beta (b_r-x_r)}{2\eta}\left[(1+\delta)\left(\log \frac{[b_r-y_r]_+ + \delta(b_r-x_r\land y_r)}{b_r-x_r + \delta(b_r-x_r\land y_r)} +\1_{y_r<b_r}\frac{y_r-x_r}{b_r-x_r + \delta(b_r-x_r\land y_r)}\right) + 1\right].\label{eq:PhiChangeCrCase2}
\end{align}
If $y_r\ge b_r$, then this simplifies to
\[
b_r'\cdot\partial_{b_r}D = \frac{\beta (b_r-x_r)}{2}\left[1 -1/\eta\right].
\]
Then we obtain \eqref{eq:wstarGoal} via
\begin{align*}
\alpha+D' &\le \alpha -\beta (b_r-x_r)\left(1-2n\delta\right)  + \frac{\beta (b_r-x_r)}{2}\left[1 -1/\eta\right]\\
&= \alpha -\frac{\beta(b_r-x_r)}{2}\left(1-4n\delta + 1/\eta\right)\le \alpha -\frac{b_r-x_r}{2}< 0,
\end{align*}
where the strict inequality uses $b_r-x_r > 2\alpha$. Thus 
$\alpha + D' \le (1+\epsilon)[\alpha+x_r-y_r]_+$.

Otherwise (if $y_r < b_r$), to show \eqref{eq:wstarGoal} we need to prove that
\begin{align}
\alpha&-\beta (b_r-x_r)\left(1-2n\delta\right) + \beta(1+\delta)(b_r-y_r) \nonumber\\
&-\frac{\beta (b_r-x_r)}{2\eta}\left[(1+\delta)\left(\log \frac{b_r-y_r + \delta(b_r-x_r\land y_r)}{b_r-x_r + \delta(b_r-x_r\land y_r)} +\frac{y_r-x_r}{b_r-x_r + \delta(b_r-x_r\land y_r)}\right) + 1\right]\nonumber\\
&\le (1+\epsilon)[\alpha+x_r-y_r]_+.\label{eq:wstarIntermediate}
\end{align}

If $y_r\in[x_r+\alpha,b_r)$, we need to show that the left-hand side is non-positive. Indeed, in this case the left-hand side is a convex function of $y_r$, so it is maximized when $y_r$ approaches one of the extreme values $b_r$ or $x_r+\alpha$. When $y_r$ approaches $b_r$ (from below), the expression for $b_r'\partial_{b_r}D$ is strictly smaller than the expression for the case $y_r=b_r$, so this case follows from the previous case. At the other extreme, $y_r=x_r+\alpha$, the left-hand side becomes
\begin{align*}
\alpha&-\beta (b_r-x_r)\left(1-2n\delta\right) + \beta(1+\delta)(b_r-x_r-\alpha) \\
&-\frac{\beta (b_r-x_r)}{2\eta}\left[(1+\delta)\left(\log \frac{(1+\delta)(b_r-x_r)-\alpha}{(1+ \delta)(b_r-x_r)} +\frac{\alpha}{(1+\delta)(b_r-x_r)}\right) + 1\right] \\
&\le \beta (b_r-x_r)\left((2n+1)\delta - \frac{1+\delta}{2\eta}\left(\log \frac{(1+\delta)(b_r-x_r)-\alpha}{(1+ \delta)(b_r-x_r)} +\frac{\alpha}{(1+\delta)(b_r-x_r)}\right) - \frac{1}{2\eta}\right) \\
&= \beta (b_r-x_r)\left((2n+1)\delta + \frac{1+\delta}{2\eta}\sum_{j=2}^\infty\frac{1}{j}\left(\frac{\alpha}{(1+\delta)(b_r-x_r)}\right)^j - \frac{1}{2\eta}\right) \\
&\le \beta (b_r-x_r)\left((2n+1)\delta + \frac{1}{2\eta}\sum_{j=2}^\infty\frac{1}{j}\left(\frac{1}{2}\right)^j - \frac{1}{2\eta}\right) \\
&\le 0,
\end{align*}
where the equality uses $\log(1-z)=-\sum_{j=1}^\infty\frac{z^j}{j}$ for $z\in[0,1)$ and $2\alpha< b_r-x_r$.

If $y_r\in[x_r,x_r+\alpha]$, then showing \eqref{eq:wstarIntermediate} still amounts to upper bounding a convex function of $y_r$, so only the extreme values are relevant. The case $y_r=x_r+\alpha$ was covered by the previous case, and $y_r=x_r$ will be included in the following only remaining case.

If $y_r\le x_r$, then the left-hand side of \eqref{eq:wstarIntermediate} is equal to
\begin{align*}
\alpha&+\beta(1+\delta)(x_r-y_r) \\
&+\beta (b_r-x_r)\left[(2n+1)\delta - \frac{1+\delta}{2\eta}\left(\log \frac{(1+\delta)(b_r-y_r)}{b_r-x_r + \delta(b_r-y_r)} +\frac{y_r-x_r}{b_r-x_r + \delta(b_r-y_r)}\right) -\frac{1}{2\eta}\right] \\
&\le \alpha+ \beta(1+\delta)(x_r-y_r) +\beta (b_r-x_r)\frac{1+\delta}{2\eta}\frac{x_r-y_r}{b_r-x_r + \delta(b_r-y_r)}\\
&\le \alpha+ \beta(1+\delta)(x_r-y_r) +\beta\frac{x_r-y_r}{2\eta}\\
&\le (1+\epsilon)(\alpha+x_r-y_r),
\end{align*}
where the last inequality uses $\epsilon\ge \frac{3}{\eta}$.
This concludes the proof of \eqref{eq:wstarIntermediate} and therefore \eqref{eq:wstarGoal}. We proved the following:
\begin{lemma}\label{lem:wstarService}
	The algorithm's service cost is $(1+\epsilon)$-competitive against the total offline cost.
\end{lemma}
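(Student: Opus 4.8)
The plan is a standard potential-telescoping argument that merely assembles the pieces already in place: the pointwise bound~\eqref{eq:wstarGoal} (change of $D$ under the online moves) together with the Lipschitz estimate of Lemma~\ref{lem:offLipschitz} (change of $D$ under the offline moves), using $D$ from~\eqref{eq:pseudoBregman} as the potential. Concretely I would show $\On_S \le (1+\epsilon)\Off + O(1)$, where $\On_S$ is the total online service cost and $\Off$ the total offline (service plus movement) cost, which is exactly the asserted $(1+\epsilon)$-competitiveness of the service cost against the total offline cost.

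First I would check that $D$ stays bounded throughout any execution, so that competitiveness follows up to an additive constant. Since $b_i\in(x_i,2]$, $y_i\in[0,1]$, and $\sum_i x_i = 1$ forces $x_i > 1-\sum_{j\ne i}b_j \ge 3-2n$, all of $b_i,x_i,y_i$ lie in a bounded interval. Moreover, a short case check ($y_i\le x_i$; $x_i<y_i\le b_i$; $y_i>b_i$), using only $b_i>x_i$, shows that the argument of each logarithm in~\eqref{eq:pseudoBregman} always lies in $[\delta/(1+\delta),(1+\delta)/\delta]$, so each logarithm is at most $\log\frac{1+\delta}{\delta}$ in absolute value. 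Hence $|D|\le C$ for a constant $C$ depending only on the instance, not on time.

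Next, at almost every time $t$ the derivative $D'$ exists, and by the chain rule it splits as $D' = \big(\sum_i x_i'\,\partial_{x_i}D + b_r'\,\partial_{b_r}D\big) + \sum_i y_i'\,\partial_{y_i}D$, the first group being the change of $D$ caused by the online moves (with $y$ held fixed) and the second the change caused by the offline moves (with $x,b$ held fixed). By~\eqref{eq:wstarGoal}, $\alpha_t + \sum_i x_i'\partial_{x_i}D + b_r'\partial_{b_r}D \le (1+\epsilon)[\alpha_t+x_r-y_r]_+$, i.e.\ $(1+\epsilon)$ times the instantaneous offline service cost, while Lemma~\ref{lem:offLipschitz} gives $\sum_i y_i'\,\partial_{y_i}D \le (1+\epsilon)\sum_i w_i|y_i'| = (1+\epsilon)\norm{y'(t)}$, i.e.\ $(1+\epsilon)$ times the instantaneous offline movement cost. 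Adding these,
\[
\alpha_t + D'(t) \,\le\, (1+\epsilon)\Big([\alpha_t+x_r-y_r]_+ + \norm{y'(t)}\Big),
\]
which also holds trivially at times when the online algorithm is stationary (then $\alpha_t=0$ and $x'=b'=0$).

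Finally I would integrate this inequality over $t\in[0,\infty)$: the left side integrates to $\On_S + \big(D(\infty)-D(0)\big)$ and the right side to $(1+\epsilon)\,\Off$, so $|D|\le C$ yields $\On_S \le (1+\epsilon)\Off + 2C$, which is the claim. I do not expect a genuine mathematical obstacle: the substantive case analysis behind~\eqref{eq:wstarGoal} (the regimes $b_r-x_r\le 2\alpha$ and $b_r-x_r>2\alpha$, with subcases on $y_r$) has already been carried out, and Lemma~\ref{lem:offLipschitz} supplies the offline Lipschitz bound, so what remains is the routine telescoping. The only points requiring care are bookkeeping: the times where $b_r'$ jumps (at $b_r-x_r=2\alpha$) or where derivatives otherwise fail to exist form a null set and do not affect the integral, and the split of $D'$ into online and offline contributions is legitimate in the continuous-time model (equivalently, one may take offline moves to occur at times disjoint from online moves).
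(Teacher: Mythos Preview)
Your proposal is correct and follows exactly the approach the paper itself takes: the paper states right before the lemma that \eqref{eq:wstarGoal} together with Lemma~\ref{lem:offLipschitz} yields $(1+\epsilon)$-competitive service cost, and you simply spell out this telescoping argument in more detail (including the boundedness check for $D$, which the paper leaves implicit). One small quibble: your parenthetical that ``when the online algorithm is stationary, $\alpha_t=0$ and $x'=b'=0$'' is not quite right---when $\alpha=0$ the dynamics still move $x$ and $b$---but this is harmless since \eqref{eq:wstarGoal} is established for all times regardless.
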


\subsection{Movement cost analysis}
The proofs of Theorem~\ref{thm:refined} and Theorem~\ref{thm:non-refined} are completed by combining Lemma~\ref{lem:wstarService} with the following.
\begin{lemma}
The algorithm's movement cost is at most $O(\eta)$ times its service cost, up to a bounded additive error.
\end{lemma}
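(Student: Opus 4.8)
The plan is to bound the online movement cost indirectly through the total variation of the baseline $b$, following the blueprint of \eqref{eq:brDesired1}--\eqref{eq:brDesired2}: first charge the movement of $x$ to the variation of $b$, then the variation of $b$ to the service cost $\mathrm{SC}:=\int_0^T\alpha_t\,dt$ over the (finite) instance $[0,T]$. A subtlety, discussed below, is that the movement rate $\norm{x'(t)}$ is \emph{not} bounded pointwise by a constant times $w_{r_t}|b_{r_t}'(t)|$, so the argument must be global. First I would verify \eqref{eq:brDesired1}--\eqref{eq:brDesired2}: for $i=r$ the bracketed factor in \eqref{eq:xUpdateNew} equals $1-\frac{(b_r-x_r+\delta S)/w_r}{\sum_j(b_j-x_j+\delta S)/w_j}\in[0,1)$, so $0\le w_rx_r'\le\eta(b_r-x_r)$; and \eqref{eq:bUpdate} gives $w_r|b_r'|\ge(b_r-x_r)/2$ in both branches (it equals $\alpha\ge(b_r-x_r)/2$ when $b_r-x_r\le 2\alpha$ and $(b_r-x_r)/2$ otherwise), whence $w_rx_r'\le 2\eta\,w_r|b_r'|$ and $w_r[b_r']_+\le\alpha_t$.

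Next I would decompose the movement cost and integrate. Since $b_r-x_r>0$ and $b_i-x_i+\delta S>0$, the update \eqref{eq:xUpdateNew} satisfies $x_r'\ge0$ and $x_i'\le0$ for $i\ne r$, so together with $\sum_i x_i'=0$,
\begin{align*}
\norm{x'(t)}=\sum_i w_i|x_i'(t)|=w_rx_r'(t)+\sum_{i\ne r}w_i(-x_i'(t))=2\,w_rx_r'(t)-\sum_i w_ix_i'(t).
\end{align*}
Integrating over $[0,T]$: the term $\int_0^T\sum_i w_ix_i'\,dt=\sum_i w_i(x_i(T)-x_i(0))$ telescopes and is at least $-\max_i w_i$ (using $x_i(T)\ge0$ and $\sum_i x_i(0)=1$). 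For the other term, the inequality $w_rx_r'\le 2\eta\,w_r|b_r'|$ gives $\int_0^T w_{r_t}x_{r_t}'\,dt\le 2\eta\int_0^T w_{r_t}|b_{r_t}'|\,dt=2\eta\sum_i w_i\!\int_0^T|b_i'|\,dt$ (only $b_{r_t}$ moves at time $t$). The total variation satisfies $\int_0^T|b_i'|\,dt=2\int_0^T[b_i'(t)]_+\,dt+b_i(0)-b_i(T)\le 2\int_0^T[b_i'(t)]_+\,dt+2$ (as $b_i(0)\le 2$ and $b_i(T)>x_i(T)\ge0$), so summing with weights and using $w_r[b_r']_+\le\alpha_t$ yields $\sum_i w_i\!\int_0^T|b_i'|\,dt\le 2\,\mathrm{SC}+2\sum_i w_i$. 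Chaining, $\int_0^T\norm{x'}\,dt\le 4\eta\bigl(2\,\mathrm{SC}+2\sum_i w_i\bigr)+\max_i w_i=8\eta\,\mathrm{SC}+O(\eta\sum_i w_i)$; since $\eta=O(\tfrac1\epsilon+\log n)$ and $\sum_i w_i$ depend only on the metric and $\epsilon$, the additive term is a bounded constant. Together with Lemma~\ref{lem:wstarService} this completes the proofs of Theorems~\ref{thm:non-refined} and \ref{thm:refined}.

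The step I expect to be the crux is the decomposition identity, precisely because no pointwise bound is available: if the requested leaf $r_t$ is light while some other leaf $j$ is heavy with a large gap $b_j-x_j$, the weighted outflow $w_j|x_j'|$ out of $j$ can exceed $w_{r_t}|b_{r_t}'|$ by up to a $\tfrac1\delta$ factor, so local charging fails. Writing $\norm{x'}=2w_rx_r'-\sum_i w_ix_i'$ isolates all of this ``excess'' into $-\sum_i w_ix_i'$, about which one needs only that its time integral telescopes to a bounded quantity (equivalently, that the bounded potential $\frac1{2\eta}\sum_i w_i(b_i-x_i)$ absorbs it); what remains, $2w_rx_r'$, is exactly the quantity that \eqref{eq:bUpdate} was engineered to control. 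Arranging one update of $b_r$ to satisfy both \eqref{eq:brDesired1} and \eqref{eq:brDesired2} simultaneously is the genuinely delicate design decision, but it has already been made, so the two steps above reduce to verification and bookkeeping.
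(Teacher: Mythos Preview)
Your proof is correct and is essentially the paper's argument unpacked: both reduce the full movement to the positive part $\norm{[x']_+}=w_rx_r'\le\eta(b_r-x_r)$ (the paper's ``it suffices to bound $\norm{[x']_+}$'' is exactly your identity $\norm{x'}=2w_rx_r'-\sum_i w_ix_i'$ followed by telescoping of $\sum_i w_ix_i$), and then charge this quantity via the movement of $b$. The paper packages your total-variation bookkeeping for $b$ into the single potential $\Psi=2\eta\sum_i w_ib_i$ and obtains the \emph{pointwise} inequality $\norm{[x']_+}+\Psi'\le 4\eta\alpha$ in two lines (one per branch of \eqref{eq:bUpdate}); so the local charging you worried about \emph{is} available once one looks only at $\norm{[x']_+}$, and your route is the same argument written out longhand.
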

\begin{proof}
It suffices to bound the algorithm's movement cost for increasing the $x_i$, i.e., $\norm{[x']_+}=\sum_i w_i[x_i']_+$. Using the potential function $\Psi:=2\eta \sum_i w_i b_i$, we will show that $\norm{[x']_+} + \Psi'\le 4\eta\alpha$.

If $b_r-x_r\le 2\alpha$, then
\[
\norm{[x']_+} + \Psi' \le \eta(b_r-x_r) + 2\eta\alpha \le 4\eta\alpha.\]
Otherwise,
\[
\norm{[x']_+} + \Psi' \le \eta(b_r-x_r) - 2\eta \frac{b_r-x_r}{2} = 0 \le 4\eta\alpha.\qedhere\]
\end{proof}

\section{Non-convex metric allocation}\label{sec:nonconvex}
We now consider the variant of MAP where cost functions are allowed to be non-convex.

\paragraph{Notation.} Consider a tree with vertex set $V$ and fix some root $\rt\in V$. Write $V^0:=V\setminus\{\rt\}$. For $u\in V^0$, denote by $w_u>0$ the length of the edge $\{u,p(u)\}$ and denote by $p(u)$ the parent of $u$. We also write $v\prec u$ to denote that $v$ is a child of $u$. Let $\cL\subset V$ be the set of leaves of $V$. For $u\in V$, let $V_u$ be the set of vertices in the subtree rooted at $u$, and $\cL_u:=V_u\cap \cL$ the set of leaves in that subtree.  We take the metric space to be the set of leaves $\cL$ (without loss of generality, since one can add extra leaves at arbitrarily small distance from any internal vertex). The distances on $\cL$ are induced by the path metric.

Set $x_{\rt}:=1$. An allocation vector on $\cL$ corresponds to an element of the following convex body:
\begin{align*}
    K:=\{x\in \R^{V^0}\colon \forall u\in V\setminus\cL\colon x_u=\sum_{v\prec u} x_v\}
\end{align*}
As before (and as justified in Appendix~\ref{app:simplified}), we do not explicitly enforce that $x_i$ remain non-negative.

The norm to measure movement cost on a tree is given by $\norm{z}:= \sum_{u\in V^0} w_u |z_u|$ (i.e., the instantaneous movement cost of our algorithm will be given by $\norm{x'}$).

\subsection{Algorithm on trees}
Our algorithm is induced by mirror descent with the following \emph{weighted $\ell_2^2$ regularizer}:
\begin{align*}
    \Phi(x):=\frac{1}{2}\sum_{u\in V^0} w_u x_u^2.
\end{align*}
Denote by $\ell_t\in\cL$ the requested leaf at time $t$ and by $\alpha_t:=c_t(x(t))=c_{t,\ell_t}(x_{\ell_t}(t))$ the instantaneous service cost of the online algorithm at time $t$. We run mirror descent using the control function $f(t)=\alpha_t e_{\ell_t}$.\footnote{In order to satisfy the assumptions of \cite{BCLLM18} that guarantee that the mirror descent path is absolutely continuous and well-defined, we assume that $\alpha_t$ is a piecewise constant function of $t$. This is without loss of generality, as cost functions are approximated arbitrarily well by piecewise constant functions.} Observe that the algorithm does not need to know the entire cost function but only its value for the current online position.

Note that the Hessian of $\Phi$ is the diagonal matrix with entries $w_u$, and entries of the normal\footnote{In general, for a convex body defined by equality constraints, $K=\{x\in\R^N\colon Ax=b\}$ where $A\in\R^{m\times N}$ and $b\in\R^m$, the normal cone is given by $N_K(x)=\{A^{T}\lambda\colon \lambda\in\R^m\}$.} cone $N_K(x)$ for $x\in K$ are of the form $(\lambda_{p(u)}-\lambda_{u})_{u\in V^0}$ for $\lambda\in \R^{V}$, where $\lambda_{u}=0$ if $u\in\cL$ is a leaf. For $u\in V\setminus\cL$, we call $\lambda_u$ the \emph{Lagrange multiplier} of the constraint $x_u=\sum_{v\prec u} x_v$. Plugging into the mirror descent equation~\eqref{eq:MD} shows that the explicit dynamics of our algorithm is of the form
\begin{align*}
    x_u'(t)=\frac{\alpha_t\1_{u=\ell_t} + \lambda_u(t)-\lambda_{p(u)}(t)}{w_u}.
\end{align*}

\subsection{Basic properties of the algorithm}
By the following Lemma, which is proved in Appendix~\ref{app:LagrangeProof}, the Lagrange multipliers of internal vertices are positive:
\begin{lemma}\label{lem:lambdaPos}
For all $u\in V\setminus\cL$, $\lambda_u(t)> 0$.
\end{lemma}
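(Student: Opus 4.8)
The plan is to prove that $\lambda_u(t)>0$ for all internal vertices $u$ by analyzing the structure of the normal cone element $\lambda$ that realizes the mirror descent dynamics, using that the control function $f(t)=\alpha_t e_{\ell_t}$ only pushes mass onto the requested leaf. The key observation is that the constraint $x_u=\sum_{v\prec u}x_v$ is ``one-sided'' in the following sense: whenever the algorithm tries to increase $x_{\ell_t}$, every ancestor constraint of $\ell_t$ must be enforced, and the Lagrange multipliers propagating down from the root must all push in the same direction. More precisely, I would characterize $\lambda$ as the solution of a convex (quadratic) minimization problem: the mirror descent update direction $x'(t)$ is the projection of the scaled control vector $\nabla^2\Phi(x)^{-1}f(t)$ onto the tangent space $\{z: z_u=\sum_{v\prec u}z_v\}$ under the $\Phi$-induced inner product, and $\lambda$ is the dual certificate of this projection.

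The main steps would be as follows. First, I would write the projection explicitly: we want $x_u'=\frac{\alpha_t\1_{u=\ell_t}+\lambda_u-\lambda_{p(u)}}{w_u}$ with the convention $\lambda_u=0$ for $u\in\cL\cup\{\rt\}$, subject to $x_u'=\sum_{v\prec u}x_v'$ for all internal $u$. Substituting the first equation into the constraint gives, for each internal $u$, a linear relation
\begin{align*}
\frac{\alpha_t\1_{u=\ell_t}+\lambda_u-\lambda_{p(u)}}{w_u}=\sum_{v\prec u}\frac{\alpha_t\1_{v=\ell_t}+\lambda_v-\lambda_u}{w_v}.
\end{align*}
This is a linear system in the $\lambda_u$ (a weighted graph Laplacian-type system on the tree), and $\alpha_t e_{\ell_t}$ appears as a nonnegative source supported only on the path from $\ell_t$ to the root. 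Second, I would argue that the unique solution of this system has all $\lambda_u>0$. The cleanest way is a maximum-principle / extremal argument: consider the vertex $u^\ast$ minimizing $\lambda_{u^\ast}$ over internal vertices; if $\lambda_{u^\ast}\le 0$, examine the balance equation at $u^\ast$ and derive a contradiction from the sign of the source term and the fact that the neighboring multipliers are all $\ge\lambda_{u^\ast}$, unless $u^\ast$ lies off the request path, in which case one shows $\lambda$ is constant down a subtree with leaf boundary value $0$, forcing $\lambda_{u^\ast}=0$ and then $\alpha_t>0$ on the path forces strict positivity somewhere and propagates. Alternatively, and perhaps more robustly, I would give a potential/telescoping argument: sum $w_v x_v'$ appropriately over the subtree $V_u$ to get $\lambda_u$ expressed as a telescoping sum of the net outflow, showing $\lambda_u$ equals (a positive multiple of) the ``excess demand'' routed through $u$, which is manifestly positive because the only source $\alpha_t>0$ sits at $\ell_t$ and every leaf other than $\ell_t$ can only have $x_v'$ decreasing.

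Concretely, the telescoping identity I expect to use is: summing the dynamics $w_v x_v' = \alpha_t\1_{v=\ell_t}+\lambda_v-\lambda_{p(v)}$ over all $v\in V_u^0:=V_u\setminus\{u\}$, the internal $\lambda_v$ terms cancel in pairs along edges interior to $V_u$, leaving
\begin{align*}
\sum_{v\in V_u^0} w_v x_v' = \alpha_t\1_{\ell_t\in\cL_u} - \lambda_u\cdot(\text{number of children of }u)\ \ \text{-- schematically,}
\end{align*}
so that $\lambda_u$ is a positive linear combination of $\alpha_t\1_{\ell_t\in\cL_u}$ and the outflows $w_vx_v'$ on the boundary edges; since leaves $v\ne\ell_t$ have $x_v'\le 0$ only decreases while mass conservation forces net inflow toward $\ell_t$, one concludes $\lambda_u>0$ whenever $\alpha_t>0$. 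I expect the main obstacle to be making this sign propagation fully rigorous: one has to handle the case $\ell_t\notin\cL_u$ (where $u$ is not on the request path), showing that the multipliers on such off-path subtrees are still strictly positive rather than merely nonnegative — this likely requires observing that some mass flows out of every subtree not containing $\ell_t$ (since $x'_{\ell_t}>0$ and $\sum x'=0$), and then inducting from the leaves upward. The boundary condition $\lambda_u=0$ for leaves, together with strict positivity of $\alpha_t$, should be exactly what breaks the degenerate all-zero solution.
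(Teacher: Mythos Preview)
Your telescoping identity is wrong for general trees, and this is the main gap. When you sum $w_v x_v' = \alpha_t\1_{v=\ell_t}+\lambda_v-\lambda_{p(v)}$ over $v\in V_u\setminus\{u\}$, the term $\lambda_w$ for an internal $w$ appears once with a $+$ sign (from $w$ itself) but once with a $-$ sign for \emph{each} child of $w$. These cancel ``in pairs'' only if every internal vertex has exactly one child, i.e.\ on a path. On a branching tree you instead get $\sum_{w}(1-c_w)\lambda_w - c_u\lambda_u$ with $c_w$ the number of children of $w$, which does not isolate $\lambda_u$ and does not have a definite sign. So the ``$\lambda_u$ equals excess demand routed through $u$'' picture, while morally right, is not what the sum produces, and the argument does not close. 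Your maximum-principle sketch is more promising (the balance equation at an off-path internal vertex does express $\lambda_u$ as a weighted average of its neighbours' values), but you have not said how to handle the on-path vertices and the leaf boundary, which is exactly the step you flagged as the main obstacle.

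The paper takes a different and cleaner route that sidesteps both issues: it never tries to compute or bound $\lambda_u$ directly. Instead it proves that $\lambda_u$ and $\lambda_{p(u)}$ always have the \emph{same sign}, by two short extremal-depth contradictions (choose a deepest off-path vertex, respectively a shallowest on-path vertex, where the signs differ, and use $x_u'=\sum_{v\in\cL_u}x_v'$ together with the leaf formula $x_v'=-\lambda_{p(v)}/w_v$ to derive a sign inconsistency). Once all internal $\lambda_u$ share a sign, summing the leaf derivatives shows that sign cannot be $\le 0$ since $x_{\ell_t}'>0$ would then force $\sum_{v\in\cL}x_v'>0$, contradicting the simplex constraint. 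This avoids any telescoping over subtrees and handles the on-path/off-path dichotomy symmetrically.
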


The next lemma shows that the flow of the resource has $\ell_t$ as its unique sink.
\begin{lemma}\label{lem:flow}
For all $u\in V$, $x_u'(t)\ge 0$ if and only if $\ell_t\in\cL_u$.
\end{lemma}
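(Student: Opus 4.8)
The plan is to understand the sign of $x_u'(t)$ by working down from the root using the structure of the mirror-descent dynamics $x_u' = (\alpha_t \1_{u=\ell_t} + \lambda_u - \lambda_{p(u)})/w_u$ together with the positivity of the internal Lagrange multipliers from Lemma \ref{lem:lambdaPos} and the flow-conservation constraints $x_u = \sum_{v \prec u} x_v$. First I would fix $t$ and drop it from the notation, and set $\lambda_{\rt} := 0$ and $\lambda_u := 0$ for leaves $u$, so that the formula $x_u' = (\alpha\1_{u=\ell} + \lambda_u - \lambda_{p(u)})/w_u$ holds uniformly for all $u \in V^0$; also note $x_{\rt}' = 0 = \sum_{v \prec \rt} x_v'$. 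The key quantity to track is the ``net flow into the subtree $V_u$'', namely $F_u := w_u x_u' = \alpha\1_{u=\ell} + \lambda_u - \lambda_{p(u)}$, and the observation that summing $x_v' = x_v'$ over children and using conservation gives $x_u' = \sum_{v\prec u} x_v'$, i.e. the flow is conserved at every internal vertex with $\ell$ as the only possible source/sink.

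The main step is a downward induction along the tree establishing: for every $u \in V$, if $\ell \in \cL_u$ then $\lambda_u \le$ (something making $x_u' > 0$), and if $\ell \notin \cL_u$ then $\lambda_u$ forces $x_u' \le 0$; more precisely I would prove directly that $x_u' > 0 \iff \ell \in \cL_u$ (strictness on the positive side, but I should be careful — the statement only claims ``$\ge 0$ iff $\ell\in\cL_u$'', so on the complement I need $x_u' < 0$ or at least $x_u' \le 0$ with the iff reading taken as: $x_u'\ge 0$ precisely on the branch containing $\ell$). I'd argue by contradiction/extremal choice: suppose some vertex $v$ with $\ell \notin \cL_v$ has $x_v' \ge 0$; push this up or down. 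Concretely, consider the edge from $u := p(v)$; since $\ell\notin\cL_v$ and $v$ is not a leaf-ancestor of $\ell$, we have $\lambda_v$ contributing and $x_v' = (\lambda_v - \lambda_u)/w_v \ge 0$ would force $\lambda_v \ge \lambda_u > 0$. Then among all children $v'$ of $u$ with $\ell \notin \cL_{v'}$, similarly $\lambda_{v'} \ge \lambda_u$, while the unique child $v^*$ with $\ell \in \cL_{v^*}$ (if $\ell \in \cL_u$) has $x_{v^*}' = (\lambda_{v^*} - \lambda_u)/w_{v^*}$; summing $\sum_{v'\prec u} x_{v'}' = x_u'$ and using that $\lambda_u$ is precisely the Lagrange multiplier chosen to enforce conservation should yield a contradiction with $\lambda_u > 0$ unless all non-$\ell$ children have $x_{v'}' \le 0$. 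The cleanest route is probably: show that the set $A := \{u : x_u' > 0\}$ is exactly the path from $\ell$ up to (but not including) $\rt$, by observing that flow conservation forces $\sum_{u} F_u$ telescoping along any root-to-leaf path to equal $\alpha$ on the $\ell$-path and $0$ off it, combined with $\lambda_u > 0$ preventing cancellations of the wrong sign.

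The step I expect to be the main obstacle is ruling out the ``sideways leakage'' case: a priori the dynamics could send a positive amount of resource into a subtree $V_u$ not containing $\ell$ and compensate by over-draining another such subtree, which flow conservation alone does not forbid. This is exactly where Lemma \ref{lem:lambdaPos} ($\lambda_u > 0$ for all internal $u$) must be used in an essential way — the positivity of every internal multiplier is what forces the flow to be ``monotone toward $\ell$'' rather than merely conservative. I would make this precise by characterizing $\lambda_u$ variationally (it is the dual optimal for the projection defining the mirror-descent step, equivalently the minimizer of a strictly convex quadratic), deducing complementary-slackness-type monotonicity: $\lambda_u$ can only be pinned at a positive value if the subtree $V_u$ is ``pulling'' resource, which by the control function $f = \alpha e_\ell$ can only happen when $\ell \in \cL_u$. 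Once that monotonicity is in hand, the iff follows immediately, and the direction $\ell \in \cL_u \Rightarrow x_u' \ge 0$ is the easy half (it is the net inflow feeding the request, and equals $\sum_{\text{path}} $ of the telescoped $\alpha$ minus positive multipliers, which one checks stays nonnegative).
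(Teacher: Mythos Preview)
You have identified the right two ingredients (Lemma~\ref{lem:lambdaPos} and the conservation constraints $x_u'=\sum_{v\prec u}x_v'$), but you are missing the one-line observation that makes the proof immediate and are instead proposing a detour that has real problems.

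The missing observation is to look at the \emph{leaves} first. For any leaf $v\ne\ell_t$ the update formula gives $x_v'=-\lambda_{p(v)}/w_v$, which is strictly negative by Lemma~\ref{lem:lambdaPos}. Iterating the conservation constraints gives $x_u'=\sum_{v\in\cL_u}x_v'$ for every $u$, so if $\ell_t\notin\cL_u$ then $x_u'$ is a sum of strictly negative terms and hence $x_u'<0$. That is exactly the ``sideways leakage'' direction you flagged as the main obstacle, and it falls out in one line; no extremal-choice argument or variational characterization of the multipliers is needed. The converse direction then follows by walking down the root-to-$\ell_t$ path: $x_{\rt}'=0$, and at each step the unique on-path child has derivative equal to the parent's minus the (strictly negative) derivatives of its off-path siblings, hence $\ge 0$. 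This is precisely the paper's proof.

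Two concrete issues with the route you sketch. First, your convention $\lambda_{\rt}:=0$ is not correct in this setup: the root constraint $\sum_{v\prec\rt}x_v=1$ has its own multiplier $\lambda_{\rt}$, which Lemma~\ref{lem:lambdaPos} asserts is strictly positive (and indeed $\lambda_{\rt}$ appears throughout the subsequent analysis). With the correct $\lambda_{\rt}>0$, your telescoping along a root-to-leaf path gives $\alpha\cdot\1_{\ell_t\text{ on path}}-\lambda_{\rt}$, not $\alpha$ or $0$, so that computation does not by itself pin down the signs. Second, the ``complementary-slackness-type monotonicity'' you propose (that $\lambda_u>0$ only when the subtree at $u$ is pulling resource, i.e.\ when $\ell_t\in\cL_u$) is simply false here: the constraints are equalities, there is no complementary slackness, and Lemma~\ref{lem:lambdaPos} says $\lambda_u>0$ for \emph{every} internal vertex, including those whose subtrees do not contain $\ell_t$. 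So that line of argument would have led you to a contradiction with the lemma you are invoking.
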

\begin{proof}
    By Lemma~\ref{lem:lambdaPos} and the fact that $\lambda_v=0$ when $v$ is a leaf, we have $x_v'<0$ for all $v\in\cL\setminus\{\ell_t\}$. By the constraints of the polytope $K$, this shows that $x_v'<0$ for every $v\in V$ that is \emph{not} ancestor of $\ell_t$. Since the $x_{\rt}=1$ is fixed, this means that $x_u'>0$ for all $u\in V^0$ with $\ell_t\in\cL_u$.
\end{proof}

\subsection{Competitive analysis}
Unlike in the case of convex cost functions, the Bregman divergence is not useful as a potential function here. The reason is that the bound \eqref{eq:BregmanBound} is only useful if cost functions are sufficiently smooth, but in our case, the cost function can be an arbitrary non-increasing function. In particular, the only information we can use about the offline service cost is that it is at least $\alpha_t$ if $y_{\ell_t}\le x_{\ell_t}$, but otherwise it could be $0$ (even if $y_{\ell_t}$ is very close to $x_{\ell_t}$).

Instead, we use the following two potential functions:
\begin{align*}
    P(t)&:=\sum_{u\in V^0} w_u[x_u(t)-y_u(t)]_+\\
    Q(t)&:=-\sum_{u\in V^0} n_u w_u x_u(t),
\end{align*}
where $x,y\in K$ denote the state of the online resp. offline algorithm and $n_u=|\cL_u|$ is the number of leaves in the subtree rooted at $u$. The potential $P$ is a kind of ``one-sided matching'': If $[\cdot]_+$ were replaced by the absolute value, then it would capture the cost of moving from $x$ to $y$. The potential $Q$ is similar to the ``weighted depth potentials'' used in~\cite{BCLLM18,BCLL19}, where $n_u$ is replaced by some notion of ``depth'' of a vertex $u$. The overall potential function will be a weighted sum of $P$ and $Q$.

By the next Lemma, either the offline algorithm pays at least as much service cost as the online algorithm (when $x_{\ell_t}\ge y_{\ell_t}$; recall that cost functions are non-increasing) or the potential $P(t)$ decreases.
\begin{lemma}
If $y=y(t)$ is fixed at time $t$, then $P'(t)\le\alpha_t\1_{x_{\ell_t}(t)\ge y_{\ell_t}}-\lambda_{\rt}(t)$.
\end{lemma}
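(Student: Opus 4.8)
The plan is to differentiate $P$ at a time $t$ where $P'$ exists (this is a set of full measure; as is standard we may assume $x_u\ne y_u$ whenever $x_u'\ne0$, so in particular $x_{\ell_t}\ne y_{\ell_t}$ since $x_{\ell_t}'>0$ by Lemma~\ref{lem:flow}), suppress $t$, and set $A:=\{u\in V^0\colon x_u>y_u\}$, so $P'=\sum_{u\in A}w_ux_u'$. Substituting the explicit dynamics $w_ux_u'=\alpha_t\1_{u=\ell_t}+\lambda_u-\lambda_{p(u)}$ and using $\1_{\ell_t\in A}=\1_{x_{\ell_t}>y_{\ell_t}}\le\1_{x_{\ell_t}\ge y_{\ell_t}}$, the lemma reduces to showing $\Sigma:=\sum_{u\in A}(\lambda_u-\lambda_{p(u)})\le-\lambda_{\rt}$. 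I would then telescope $\Sigma$ over the tree, grouping terms by the vertex whose multiplier they involve; since $\lambda_z=0$ at leaves this gives $\Sigma+\lambda_{\rt}=\sum_{z\in V\setminus\cL}\lambda_z\mu_z$ with $\mu_z:=\1_{z\in A}-c_z(A)+\1_{z=\rt}$, where $c_z(A)$ is the number of children of $z$ lying in $A$.

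Two facts make this tractable. (i) If an internal vertex $z$ lies in $A$ then $x_z=\sum_{v\prec z}x_v>\sum_{v\prec z}y_v=y_z$, so $z$ has a child in $A$; hence $\mu_z\le0$ for every internal $z\ne\rt$. (ii) Every strict ancestor $z$ of $\ell_t$ satisfies $\lambda_z\ge\lambda_{\rt}$: Lemma~\ref{lem:flow} gives $x_z'\ge0$, and since $z\ne\ell_t$ the dynamics reads $w_zx_z'=\lambda_z-\lambda_{p(z)}$, so $\lambda_z\ge\lambda_{p(z)}$, and chaining along the path from $\ell_t$ to $\rt$ proves the claim. Recall also $\lambda_z>0$ for all internal $z$ by Lemma~\ref{lem:lambdaPos}.

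If $c_{\rt}(A)\ge1$ then $\mu_{\rt}=1-c_{\rt}(A)\le0$ as well, and $\Sigma+\lambda_{\rt}=\sum_z\lambda_z\mu_z\le0$ at once. The main case is $c_{\rt}(A)=0$, i.e.\ no child of $\rt$ is in $A$: then $x_v\le y_v$ for every child $v$ of $\rt$, which with $\sum_{v\prec\rt}x_v=1=\sum_{v\prec\rt}y_v$ forces $x_v=y_v$ there. Since $x_{\ell_t}\ne y_{\ell_t}$, the leaf $\ell_t$ is not a child of $\rt$, so its root-path $\ell_t=\pi_0,\pi_1,\dots,\pi_k=\rt$ has $k\ge2$, with $x_{\pi_0}\ne y_{\pi_0}$ and $x_{\pi_{k-1}}=y_{\pi_{k-1}}$ (as $\pi_{k-1}$ is a child of $\rt$). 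Let $j^\ast$ be the smallest index with $x_{\pi_{j^\ast}}=y_{\pi_{j^\ast}}$; then $1\le j^\ast\le k-1$. Now $\pi_{j^\ast}\notin A$, while $\pi_{j^\ast-1}$ is a child of $\pi_{j^\ast}$ with $x_{\pi_{j^\ast-1}}\ne y_{\pi_{j^\ast-1}}$, so (the children of $\pi_{j^\ast}$ having $x$-values and $y$-values with equal sums) some child of $\pi_{j^\ast}$ is in $A$ and $\mu_{\pi_{j^\ast}}=-c_{\pi_{j^\ast}}(A)\le-1$. As $\pi_{j^\ast}$ is a strict ancestor of $\ell_t$, $\lambda_{\pi_{j^\ast}}\ge\lambda_{\rt}$ by (ii); bounding all other $\mu_z$ by $0$ and using $\mu_{\rt}=1$,
\[
\Sigma+\lambda_{\rt}\le\lambda_{\rt}+\lambda_{\pi_{j^\ast}}\mu_{\pi_{j^\ast}}\le\lambda_{\rt}-\lambda_{\pi_{j^\ast}}\le0,
\]
as desired. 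The crux is exactly this last case: the naive telescoping of $\Sigma$ leaves a stray $+\lambda_{\rt}$, and absorbing it requires locating a compensating negative contribution on the request path, which is why fact (ii) — monotonicity of the Lagrange multipliers from $\ell_t$ up to the root — is essential. A minor obstacle is the usual care about measure-zero times where $P'$ fails to exist, which we dispose of at the outset.
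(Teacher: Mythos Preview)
Your proof is correct and takes a genuinely different route from the paper. The paper splits into two cases depending on whether $x_{\ell_t}\ge y_{\ell_t}$ and in each case explicitly builds a root-to-leaf path along which to telescope: in the harder case $x_{\ell_t}<y_{\ell_t}$, it follows the root-to-$\ell_t$ path down to the deepest ancestor $u_h$ with $x_{u_h}\ge y_{u_h}$, then inductively constructs a detour $\hat u_{h+1},\dots,\hat u_m$ to some other leaf, staying inside $A$ and off the $\ell_t$-path, so that the full telescoping sum collapses to $-\lambda_{\rt}$. Your argument instead regroups $\Sigma$ by Lagrange multiplier and reduces the claim to the coefficient inequality $\sum_z \lambda_z\mu_z\le 0$; the only nontrivial case is $c_{\rt}(A)=0$, where you locate a single vertex $\pi_{j^\ast}$ on the request path with $\mu_{\pi_{j^\ast}}\le -1$ and invoke the monotonicity $\lambda_{\pi_{j^\ast}}\ge\lambda_{\rt}$ along the request path (a consequence of Lemma~\ref{lem:flow} that the paper does not isolate). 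Your approach trades the paper's explicit path construction for this extra monotonicity observation; it is arguably cleaner in that it avoids the inductive detour-building, and the same coefficient decomposition would likely generalize more readily if one wanted to adapt the argument to other potentials or polytopes.
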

\begin{proof}
    Let $u_0\succ u_1\succ \dots\succ u_k$ be the vertices on the path from $u_0=\rt$ to $u_k=\ell_t$.
    
    If $x_{\ell_t}(t)\ge y_{\ell_t}$, then since $x_u(t)$ is increasing only for $u=u_j$ by Lemma~\ref{lem:flow}, we have
    \begin{align*}
        P'(t)&\le\sum_{j=1}^kw_{u_j}x_{u_j}'(t)
        = \alpha_t + \sum_{j=1}^k(\lambda_{u_j}(t)-\lambda_{u_{j-1}}(t))
        = \alpha_t - \lambda_{\rt}(t),
    \end{align*}
    where the last equation uses $u_0=\rt$ and $\lambda_{u_k}=0$ since $u_k=\ell_t\in\cL$.
    
    Otherwise (if $x_{\ell_t}(t)< y_{\ell_t}$), let $h$ be maximal such that $x_{u_h}(t)\ge y_{u_h}$. Note that $h$ exists since $x_{\rt}=y_{\rt}=1$.
    and $h<k$ by the assumption $x_{\ell_t}(t)<y_{\ell_t}$. Since $x_{u_h}(t)\ge y_{u_h}$ but $x_{u_{h+1}}(t)< y_{u_{h+1}}$, we can find inductively a sequence $\hat u_h\succ \hat u_{h+1}\succ\dots\succ \hat u_{m}$ with $\hat u_h=u_h$ and $\hat u_m\in\cL$ such that for all $j=h+1,\dots,m$, we have $x_{\hat u_j}(t)>y_{\hat u_j}$ and $\ell_t\notin \cL_{\hat u_j}$.
    
    Then
    \begin{align*}
        P'(t)&=\sum_{u\in V^0}w_u\partial_t[x_u(t)-y_u]_+\\
        &\le\sum_{j=1}^hw_{u_j}\partial_t[x_{u_j}(t)-y_{u_j}]_+ + \sum_{j=h+1}^m w_{\hat u_j}\partial_t[x_{\hat u_j}(t)-y_{\hat u_j}]_+\\
        &\le\sum_{j=1}^hw_{u_j}x_{u_j}'(t) + \sum_{j=h+1}^m w_{\hat u_j}x_{\hat u_j}'(t)\\
        &= \sum_{j=1}^h(\lambda_{u_j}(t)-\lambda_{u_{j-1}}(t)) + \sum_{j=h+1}^m(\lambda_{\hat u_j}(t)-\lambda_{\hat u_{j-1}}(t))\\
        &= \lambda_{u_h}(t)-\lambda_{u_0}(t) + \lambda_{\hat u_m}(t) - \lambda_{\hat u_h}(t)\\
        &= -\lambda_{\rt}(t).
    \end{align*}
    The first inequality holds because the omitted summands for $u\in V^0$ with $\ell_t\in V_u$ are precisely those for $u=u_j$ for $j\ge h+1$, which are $0$ because $x_{u_j}(t)<y_{u_j}$ by maximality of $h$, and the omitted summands for $u\in V^0$ with $\ell_t\notin V_u$ are non-positive by Lemma~\ref{lem:flow}.
    The second inequality uses that $x_{u_j}'(t)\ge 0$ for $j=0,\dots,h$ by Lemma~\ref{lem:flow}, and that $x_{\hat u_j}(t)>y_{\hat u_j}$ for $j=h+1,\dots,m$.
    The last equation uses that $u_h=\hat u_h$, $u_0=\rt$ and $\lambda_{\hat u_m}(t)=0$ since $\hat u_m\in\cL$.
\end{proof}

The next lemma describes the change of the potential $Q$.

\begin{lemma}
$Q'(t)=n\lambda_{\rt}(t)-\alpha_t$.
\end{lemma}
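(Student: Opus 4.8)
The plan is a direct computation: differentiate $Q$, substitute the explicit dynamics of $x'$, and rearrange the resulting sum over the tree so that all Lagrange multipliers telescope away except the one at the root.

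First I would write $Q'(t)=-\sum_{u\in V^0} n_u w_u x_u'(t)$ and plug in $w_u x_u'(t)=\alpha_t\1_{u=\ell_t}+\lambda_u(t)-\lambda_{p(u)}(t)$ from the dynamics of the algorithm. Since $\ell_t$ is a leaf we have $n_{\ell_t}=|\cL_{\ell_t}|=1$, so the $\alpha_t$-terms contribute exactly $-\alpha_t$. It then remains to show $\sum_{u\in V^0} n_u\bigl(\lambda_u(t)-\lambda_{p(u)}(t)\bigr)=-n\lambda_{\rt}(t)$.

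For that identity I would collect, for each vertex $v\in V$, the total coefficient of $\lambda_v(t)$ in the sum. Recall $\lambda_v=0$ whenever $v\in\cL$, so only $v\in V\setminus\cL$ matter. Such a $v$ appears as the term ``$n_u\lambda_u$'' exactly when $v=u\in V^0$ (i.e.\ when $v\ne\rt$), contributing coefficient $+n_v$; and it appears as ``$-n_u\lambda_{p(u)}$'' once for each child $u\prec v$, contributing $-\sum_{u\prec v} n_u=-n_v$, where I use the elementary identity $n_v=\sum_{u\prec v}n_u$ (the leaves below $v$ are partitioned among the subtrees of its children). Hence the coefficient of $\lambda_v(t)$ vanishes for every $v\ne\rt$, while for $v=\rt$ only the second type of term survives (the root has no parent and $\rt\notin V^0$), giving coefficient $-n_{\rt}$. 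Since $n_{\rt}=|\cL_{\rt}|=|\cL|=n$, this proves $\sum_{u\in V^0} n_u(\lambda_u-\lambda_{p(u)})=-n\lambda_{\rt}(t)$, and therefore $Q'(t)=n\lambda_{\rt}(t)-\alpha_t$.

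This lemma is essentially a rearrangement identity, so there is no genuine analytic obstacle; the only thing requiring care is the bookkeeping — keeping $V^0$ distinct from $V$, remembering that $\lambda$ vanishes on leaves and that the root lies outside $V^0$, and making sure the unique surviving term is the root term with the correct coefficient $n$.
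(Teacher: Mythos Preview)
Your proof is correct and follows essentially the same approach as the paper: differentiate $Q$, substitute the mirror descent dynamics $w_u x_u' = \alpha_t\1_{u=\ell_t}+\lambda_u-\lambda_{p(u)}$, use $n_{\ell_t}=1$, and then collect the coefficient of each $\lambda_v$ to see the telescoping, invoking $n_v=\sum_{u\prec v}n_u$ and $\lambda_v=0$ on leaves so that only the root term $-n\lambda_{\rt}$ survives. The paper writes the same regrouping slightly more compactly in a single displayed chain of equalities, but the argument is identical.
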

\begin{proof}
    We have
    \begin{align*}
        Q'(t)&= - \sum_{u\in V^0} n_u(\alpha_t\1_{u=\ell_t} + \lambda_{u}(t)-\lambda_{p(u)}(t))\\
        &= -\alpha_t + n\lambda_{\rt}(t) - \sum_{u\in V^0\setminus\cL}\lambda_u(t)\Big(n_u - \sum_{v\prec u} n_v \Big)\\
        &= -\alpha_t+n\lambda_{\rt}(t),
    \end{align*}
    where the second equation uses $n_{\ell_t}=1$, $n_{\rt}=n$ and $\lambda_u=0$ for $u\in\cL$.
\end{proof}

\medskip
We are now ready to prove the tight upper bound on trees.

\medskip

\begin{proof}[Proof of~Theorem~\ref{thm:nonconvexTrees}]
Since the total movement cost due to increasing variables $x_u$ is equal to the cost due to decreasing variables $x_u$ (up to a bounded additive term), it suffices to consider the increasing movement cost, i.e., $\sum_{u\in V^0}w_u [x_u'(t)]_+$. By Lemma~\ref{lem:flow}, we know that this cost is only suffered at ancestors of the root. Thus,
the algorithm's instantaneous movement cost at time $t$ is given by
\[    \sum_{u\in V^0\colon \ell_t\in \cL_u} w_u x_u'(t) = 
    \alpha_t + \sum_{u\in V^0\colon \ell_t\in \cL_u} (\lambda_u(t)-\lambda_{p(u)}(t))= \alpha_t - \lambda_{\rt}(t).
\]
Since the service cost is $\alpha_t$, the total instantaneous online cost is $\text{On}'=2\alpha_t-\lambda_{\rt}(t)$. Thus, using $(2n-1)P + 2Q$ as the overall potential function, we get
\begin{align*}
    \text{On}' + (2n-1)P' + 2Q' &= 2\alpha_t - \lambda_{\rt}(t) + (2n-1)\left(\alpha_t\1_{x_{\ell_t}(t)\ge y_{\ell_t}}-\lambda_{\rt}(t)\right) + 2n\lambda_{\rt}(t) - 2\alpha_t\\
    &=(2n-1)\alpha_t\1_{x_{\ell_t}(t)\ge y_{\ell_t}}
\end{align*}
when the offline algorithm is at $y$ (and does not move). In this case, the offline service cost is at least $\alpha_t\1_{x_{\ell_t}(t)\ge y_{\ell_t}}$ since cost functions are non-increasing.

When the offline algorithm moves, clearly the potential $(2n-1)P + 2Q$ increases by at most $O(n)$ times the offline movement cost. Thus, the algorithm is $O(n)$-competitive.
\end{proof}

\subsection{General metrics}

It is well-known~\cite{FRT04} that any $n$-point metric space $(M,d)$ can be randomly embedded into a tree metric with distortion at most an $O(\log n)$ factor. Similarly, $(M,d)$ can be embedded deterministically into the metric $d_T$ induced by a minimum spanning tree $T$ of $(M,d)$, and it then holds that $d(i,j)\le d_T(i,j)\le (n-1)d(i,j)$ for all $i,j\in M$.\footnote{To see the latter inequality, observe that $i$ and $j$ are connected by a path of at most $n-1$ edges in $T$, each of which has length at most $d(i,j)$ since $T$ is a minimum spanning tree.} Embedding the metric space into a tree metric in this way and running our $O(n)$-competitive algorithm for trees, we obtain Corollary~\ref{cor:nonconvexGeneral}.

\subsection{Lower bound}
We now show that every algorithm for MAP with non-convex cost functions has competitive ratio at least $\Omega(n)$ on any metric space, proving Theorem~\ref{thm:LB}.

Let $M$ be an $n$-point metric space and assume without loss of generality that the minimal distance in $M$ is $1$. Consider the instance where at any time $t$, we have $r_t = \arg\min_{i\in M} x_i(t-1)$ and the cost function $c_{t,r_t}$ is defined as 
\begin{equation*}
    c_{t,r_t}(z) = 
    \begin{cases}
    1/n^2 &\text{ if $z < 1/(n-1)$}\\
    0 &\text{ otherwise.}
    \end{cases}
\end{equation*}
Since the algorithm must maintain $\sum_i x_i(t-1)=1$, we must have $x_{r_t}(t-1) \leq 1/n$. Now if the online algorithm increases $x_{r_t}$ to at least $1/(n-1)$, then it incurs movement cost of at least $1/(n-1) - 1/n \geq 1/n^2$. Otherwise, it incurs service cost of $1/n^2$.

Denote by $i^*\in M$ the point that is requested least often over the entire request sequence. Consider the offline algorithm that stays put in the configuration $y(t) = y$ with $y_{i^*}=0$ and $y_{i}=1/(n-1)$ for $i\ne i^*$. This algorithm does not incur any movement cost (except for a fixed constant amount at the beginning to reach this configuration), and it pays service cost $1/n^2$ only for requests at $i^*$. By definition, since $i^*$ is the least requested point, it incurs this cost of $1/n^2$ on at most a $1/n$ fraction of time steps. As the online algorithm incurs at least $1/n^2$ cost at each time step, the lower bound follows.
\section{Conclusion}\label{sec:conclusion}

An intriguing open problem is whether there exists a polylog$(n)$-competitive algorithm for MAP on general metrics. The concepts and techniques introduced in this paper are an important step, but some additional ideas are still needed to resolve this question on HSTs (and thus general metrics).

We feel that our techniques to better understand scale-freeness may also be useful for improving the current competitive ratio bounds for randomized $k$-server, as we explain below.

A key question in designing an algorithm on HSTs that solves a weighted-star instance of the problem at its internal vertices is to find a suitable definition of ``service cost'' that should be felt at the internal vertices of the HST. For MTS, this is now well-understood and algorithms with a competitive ratio of $O(\log n)$ on HSTs were obtained recently~\cite{BCLL19,CoesterL19}.

For $k$-server, the decision at an internal vertex is essentially ``How many servers should there be in this subtree?''. From the request sequence it may be clear that the number of servers in a subtree should be within some sub-interval $[a,b]\subsetneq[0,k]$, and the decision to be made is only where inside $[a,b]$ the algorithm's number of servers in the subtree should lie. Note that this is essentially a question of scale-freeness, and passing a suitable convex cost function to the internal vertices of the tree should help the algorithm to find the right scale and region similarly to MAP on a weighted star. Indeed, the $O(\log^2k)$-competitive algorithm for $k$-server on HSTs~\cite{BCLLM18} is based on running mirror descent on a sophisticated polytope, which impicitly helps the algorithm to identify similar regions.

We believe that a suitable extension of our algorithm for weighted stars will allow to achieve an $O(\log n)$ competitive ratio for MAP on HSTs (similarly to MTS), and as a corollary also reduce the competitive ratio of $k$-server on HSTs to a single logarithm.
\section*{Acknowledgments}
We thank Ravi Kumar, Manish Purohit and Erik Vee for many useful discussions that inspired this work.

Part of this work was carried out while both authors were at CWI in Amsterdam. Nikhil Bansal is supported  by the NWO VICI grant 639.023.812. Christian Coester is supported by the Israel Academy of Sciences and Humanities \& Council for Higher Education Excellence Fellowship Program for International Postdoctoral Researchers.

\newpage
\appendix
\section{Simplified continuous-time model}
\label{app:simplified}

\paragraph{Reduction to continuous-time model.} For the reduction from the discrete to the continuous time models, the cost function revealed at discrete time $t$ will be revealed at all times in $(t-1,t]$ in the continuous setting. Let $t^*=\arg\min_{t^*\in(t-1,t]} c_t(x(t^*))$. Then the algorithm in the discrete time version goes to state $x(t^*)$ at time $t$. By the triangle inequality, this can only reduce the algorithm's cost, while the offline cost does not change.

\paragraph{Cost at a single location.}
Consider a cost function $c(x)=\sum_{i\in M}c_i(x_i)$. Let $c^{(i)}$ be the cost function given by $c^{(i)}(x)=\frac{c_i(x_i)}{N}$, for some large $N$. Instead of issuing $c$, we can issue the cost functions $c^{(i)}$ for each point $i\in M$ in cyclic order, repeating $N$ times. As $N\to\infty$, any competitive algorithm must eventually converge towards some point $x\in\Delta$ to avoid paying infinite movement cost. Thus, the costs of the online and offline algorithm are unaffected by this change.

\paragraph{Linearizing the cost function.}

We now argue that we can assume each $c_{t,r_t}$ to be
linearly decreasing and truncated at $0$ (i.e.~a
hinge loss).
Indeed, given some convex function $c_{t,r_t}$, consider the function $\tilde{c}_{t,r_t}$ that is tangent to the graph of $c_{t,r_t}$ at the point $(x_{r_t}(t),c_{t,r_t}(x_{r_t}(t)))$ and truncated at value $0$.
As $\tilde{c}_{t,r_t}(x_{r_t}(t))=c_{t,r_t}(x_{r_t}(t))$, the instantaneous service cost of the online algorithm does not change, while on the other hand, as $\tilde{c}_{t,r_t}(y_{r_t})\le c_{t,r_t}(y_{r_t})$ for all $y_{r_t}$ (by convexity of $c_{t,r_t}$), the service cost of the offline algorithm can only decrease.

We can also assume that the slope of this function (where it is non-zero) is exactly $-1$: If the slope were $-s\ne -1$, we can replace $\tilde{c}_{t,r_t}$ by $\tilde{c}_{t,r_t} / s$ to get a function with slope $-1$, and scale time so that this function is issued for $s$ times as long as $\tilde{c}_{t,r_t}$ would be issued. This does not affect the service cost incurred by any algorithm. Finally, we can also assume that the service cost becomes $0$ at some $s_t\in[0,1]$. Indeed, additively reducing the cost function so that this becomes the case only hurts the online algorithm as an additive reduction in cost for both the online and offline algorithm only increases the competitive ratio.

\paragraph{Allowing negative $\boldsymbol{x_i}$.} Any algorithm that decreases $x_i$ to a negative value can be turned into an algorithm that keeps $x_i$ non-negative without increasing its competitive ratio. To achieve this, we can replace every cost function $c_{t,i}$ by $\tilde c_{t,i}$ defined via $\tilde c_{t,i}(x_i)=c_{t,i}(x_i)+a[-x_i]_+$, for some large $a>0$. As $a\to\infty$, the configuration of any competitive algorithm converges to an $x$ with $x_i\ge 0$ for all $i$.

\section{Equivalence of non-increasing and non-decreasing costs}\label{app:incrDecr}
Recall that we defined cost functions to be non-increasing, but in MTS they are non-decreasing. The following lemma shows that these two models are equivalent.

\begin{lemma}\label{lem:increasingDecreasing}
    For any metric space $M$ and $\rho\ge 1$, there is a $\rho$-competitive algorithm for MAP if and only if there is a $\rho$-competitive algorithm for the variant of the problem with non-decreasing cost functions.
\end{lemma}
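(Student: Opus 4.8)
\emph{Plan.} I would establish both directions at once by constructing an affine ``complement‑and‑rescale'' bijection of the configuration space that interchanges coordinate‑wise non‑increasing and non‑decreasing cost functions while multiplying every cost by a single constant. Assume $n\ge 2$ (the case $n=1$ being trivial). Let $H:=\{x\in\R^M\colon\sum_{i}x_i=1\}$; by the ``allowing negative $x_i$'' reduction of Appendix~\ref{app:simplified}, both the online and offline algorithms may be taken to operate on all of $H$. Define $T\colon H\to H$ by $T(x):=\tfrac1{n-1}(\mathbf{1}-x)$; it is an affine bijection with inverse $T^{-1}(z)=\mathbf{1}-(n-1)z$. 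The proof rests on two observations: (i) $T(x)-T(x')=-\tfrac1{n-1}(x-x')$, and since the metric‑induced norm is absolutely homogeneous, $\|T(x)-T(x')\|=\tfrac1{n-1}\|x-x'\|$ for every metric over $M$; (ii) composing a coordinate‑wise non‑increasing function with $T$ or with $T^{-1}$ — both being coordinate‑wise decreasing affine maps — yields a coordinate‑wise non‑decreasing function, and vice versa, and this composition preserves both separability and convexity of cost functions.

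\emph{From non‑decreasing to non‑increasing.} Given a $\rho$‑competitive algorithm $\mathcal A$ for the non‑decreasing variant on $M$ and an instance $I$ of MAP (non‑increasing cost functions $c_t$), I would form the non‑decreasing instance $\tilde I$ whose cost functions are $\tilde c_t:=\tfrac1{n-1}\,c_t\circ T^{-1}$ (extending each $c_{t,i}$ monotonically to all of $\R$ as in Appendix~\ref{app:simplified}; by (ii) these are coordinate‑wise non‑decreasing, separable, and convex whenever the $c_t$ are). Run $\mathcal A$ on $\tilde I$ to obtain a path $z(\cdot)$ in $H$ and output $x(\cdot):=T^{-1}(z(\cdot))$. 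By (i) the movement cost of $x$ on $I$ is $(n-1)$ times that of $z$ on $\tilde I$; since $c_t\circ T^{-1}=(n-1)\tilde c_t$, the service cost of $x$ on $I$ is also $(n-1)$ times that of $z$ on $\tilde I$; hence $\mathrm{ALG}(I)=(n-1)\,\mathrm{ALG}_{\mathcal A}(\tilde I)$. Conversely, for any offline path $y(\cdot)$ for $I$, the path $T(y(\cdot))$ is offline for $\tilde I$ with exactly $\tfrac1{n-1}$ of its cost, so $\OPT(\tilde I)\le\tfrac1{n-1}\OPT(I)$. Chaining, $\mathrm{ALG}(I)=(n-1)\,\mathrm{ALG}_{\mathcal A}(\tilde I)\le(n-1)\bigl(\rho\,\OPT(\tilde I)+c_0\bigr)\le\rho\,\OPT(I)+(n-1)c_0$, and $(n-1)c_0$ is a fixed additive constant, so the constructed algorithm is $\rho$‑competitive.

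\emph{The reverse direction} is the same argument with $T$ and $T^{-1}$ exchanged: given a $\rho$‑competitive algorithm for non‑increasing MAP and a non‑decreasing instance $J$ with cost functions $\tilde c_t$, run it on the non‑increasing instance with cost functions $(n-1)\,\tilde c_t\circ T$ and output the image under $T$ of the resulting path. Observations (i)–(ii) again force every cost to be scaled by one constant (now $n-1$), which cancels in the ratio.

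\emph{Expected difficulty.} There is no deep obstacle; the one point to get right is that the rescaling factor $\tfrac1{n-1}$ is a single number applied uniformly across all coordinates, so it multiplies the online service cost, the online movement cost, and the offline optimum identically, and therefore leaves the competitive ratio unchanged — this is precisely what prevents a loss of $\mathrm{poly}(n)$. The routine checks (operating on $H$ rather than $\Delta$, extending cost functions past $[0,1]$ while keeping them monotone and convex, and noting that an affine coordinate substitution preserves convexity and separability) are handled exactly as in Appendix~\ref{app:simplified}.
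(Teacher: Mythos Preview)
Your proposal is correct and essentially the same as the paper's proof: both use the affine involution $\hat x_i=\frac{1-x_i}{n-1}$ (your $T$) and observe that it scales all movement and service costs by the single factor $n-1$, hence preserves the competitive ratio. The only cosmetic differences are that the paper states the reduction for the simplified hinge cost functions $[x_r-s]_+$ rather than abstractly, and handles the simplex constraint in one direction by noting the transformed algorithm never needs $\hat x_i>\tfrac1{n-1}$, whereas you defer uniformly to Appendix~\ref{app:simplified}.
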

\begin{proof}
Suppose there is a $\rho$-competitive algorithm for MAP (with non-increasing cost functions). To serve an instance $I$ with non-decreasing cost functions, we will construct (online) an instance $\hat{I}$ with non-increasing cost functions. Denote by $x$ the allocation vector for instance $I$ and by $\hat x$ the allocation vector for $\hat I$. We will maintain the correspondence $\hat x_i=\frac{1-x_i}{n-1}$.

Assume without loss of generality (by arguments analogous to those in Appendix~\ref{app:simplified}) that every cost function in $I$ is of the form $c(x)=(x_r-s)_+$ for some $s\in[0,1]$ and $r\in M$. When such a cost function arrives in $I$, we issue in $\hat I$ the cost function
 \begin{align*}
    \hat c(\hat x):=
    \left(\frac{1-s}{n-1} - \hat x_r\right)_+.
\end{align*}

When the algorithm in instance $\hat I$ moves to $\hat x$, then in the original instance $I$ we move to the allocation vector $x$ given by $x_i=1-(n-1)\hat x_i$.

By the correspondence between $\hat x$ and $x$, both the service and movement costs in instance $\hat I$ are a factor $n-1$ smaller than in instance $I$, and the same is true for the offline algorithm by using an analogous correspondence. Thus, the competitive ratio is the same.

Note that the correspondence ensures that $\sum_i x_i=1$ is equivalent to $\sum_i \hat x_i=1$. Moreover, $x_i\ge 0$ is maintained because the algorithm for $\hat I$ has no reason to increase $\hat x_i$ beyond $\frac{1}{n-1}$ as the service cost is $0$ already when $\hat x_r\ge 0$ (any algorithm that increases $\hat x_i$ beyond $\frac{1}{n-1}$ can be made lazy by delaying the movement of mass, which only decreases its cost).

The converse reduction is  analogous, but in that case it can happen that $x_i\ge 0$ will be violated. However, as argued in Appendix~\ref{app:simplified}, such an algorithm can then be converted into one that maintains $x_i\ge 0$.
\end{proof}

\section{Derivation of mirror descent dynamics}\label{app:MD}
Consider the regularizer defined by
\begin{align}\Phi(x):= \frac{1}{\eta}\sum_i w_i(b_i-x_i+\delta S)\log\left(\frac{b_i-x_i}{S}+\delta\right).
\end{align}
In contrast to the definition~\eqref{eq:regWighted}, we omitted the $[\cdot]_+$ around $b_i-x_i$ inside the logarithm because the quantity is always positive for the online algorithm $x$ anyway. By computing partial derivatives with respect to the $x_i$, it is straightforward to calculate that the Hessian $\nabla^2\Phi(x)$ is a diagonal matrix, with diagonal entries
\begin{align*}
    \partial_{x_ix_i}\Phi(x)=\frac{w_i}{\eta(b_i-x_i+\delta S)}
\end{align*}
Moreover, for any $x$ in $\Delta=\{x\in\R^M\colon \sum_i x_i=1\}$, the normal cone is given by
\begin{align*}
    N_\Delta(x)=\{\lambda\cdot\1\colon \lambda\in\R\},
\end{align*}
where $\1\in\R^M$ is the all-$1$ vector. Thus, for a control function of the form $f(t)=a_t e_{r_t}$, the mirror descent equation \eqref{eq:MD} yields (omitting the dependence on $t$ in the notation)
\begin{align*}
    x_i'=\frac{\eta}{w_i}(b_i-x_i+\delta S)(a\1_{i=r} - \lambda)
\end{align*}
for some $\lambda\in\R$. Since $\sum_{i\in M} x_i=1$ is constant, we must have
\begin{align*}
    0=\sum_{i\in M}x_i'=\eta\sum_i \frac{b_i-x_i+\delta S}{w_i}(a\1_{i=r} - \lambda),
\end{align*}
which shows that
\begin{align*}
    \lambda = \frac{a(b_r-x_r+\delta S)}{w_r\sum_i \frac{b_i-x_i+\delta S}{w_i}} = \frac{a(b_r-x_r+\delta S)}{\gamma w_rS}
\end{align*}
for $\gamma=\sum_i \frac{b_i-x_i+\delta S}{Sw_i}$. We can therefore write the update rule for $x_i$ as
\begin{align*}
    x_i'&=a\frac{\eta}{w_i}(b_i-x_i+\delta S)\left(\1_{i=r} - \frac{b_r-x_r+\delta S}{\gamma w_rS}\right)\\
    &=\eta a\frac{b_r-x_r+\delta S}{w_r}\left(\1_{i=r} - \frac{b_i-x_i+\delta S}{\gamma w_iS}\right)
\end{align*}
For $a=\frac{b_r-x_r}{b_r-x_r+\delta S}$, we obtain the update rule~\eqref{eq:xUpdateNew}. For $\eta=w_i=1$, $\delta=\frac{1}{n}$ and $a=\frac{\alpha}{b_r-x_r+\delta S}$, we obtain the update rule~\eqref{eq:xupdate}.

\section{Proof of Lemma~\ref{lem:lambdaPos}}\label{app:LagrangeProof}

For convenience we drop $t$ from the notation.

We first show that for all $u\in V\setminus\cL$, the sign of $\lambda_u$ and $\lambda_{p(u)}$ is the same. We show this first for all $u$ with $\ell_t\notin \cL_u$ and then for $u$ with $\ell_t\in \cL_u$.

Suppose there exists $u\in V\setminus\cL$ with $\ell_t\notin \cL_u$ such that $\lambda_u$ and $\lambda_{p(u)}$ have different signs, and choose such $u$ of \emph{maximal} depth. Let $s\in\{-1,0,1\}$ be the sign of $\lambda_u$. By maximality of the depth of $u$, for all $v\in\cL_u$, $\lambda_{p(v)}$ has sign $s$. Since $\lambda_v=0$ for any leaf $v$, this shows that $x_v'$ has sign $-s$ (since $v\ne\ell_t$). Since $x_u'=\sum_{v\in\cL_u}x_v'$, also $x_u'$ has sign $-s$. But since $\lambda_u$ has sign $s$ and $\lambda_{p(u)}$ has a different sign, this is not possible.

Suppose now that there exists $u\in V\setminus\cL$ with $\ell_t\in \cL_u$ such that $\lambda_u$ and $\lambda_{p(u)}$ have different signs, and choose such $u$ of \emph{minimal} depth. Let $s\in\{-1,0,1\}$ be the sign of $\lambda_{p(u)}$. By minimality of the depth of $u$, and since we just showed that $\lambda_v$ and $\lambda_{p(v)}$ have the same sign if $v$ is \emph{not} an ancestor of $\ell_t$, we conclude that $\lambda_v$ has sign $s$ for all $v\in V\setminus(V_u\cup\cL)$. So for any leaf $v\in \cL\setminus\cL_u$, the sign of $\lambda_{p(v)}$ is $s$ and thus the sign of $x_v'$ is $-s$ (using that $\ell_t\in\cL_u$ and $\lambda_v=0$ for any leaf $v$). Since the total leaf mass does not change (i.e., $\sum_{v\in\cL} x_v'=0$), for the remaining leaves it holds that $\sum_{v\in\cL_u} x_v'$ must have sign $s$, and thus $x_u'$ has sign $s$. But since $\lambda_{p(u)}$ also has sign $s$ and $\lambda_u$ has a different sign, this is not possible.

So all $\lambda_u$ for $u\in V\setminus\cL$ have the same sign. This sign can only be positive since otherwise we would have $\sum_{v\in\cL} x_v'>0$ (recalling again that for $v\in\cL$, $\lambda_v=0$).

\bibliographystyle{alpha}
\bibliography{bibliography}

\end{document}